
\documentclass{article}
\usepackage{amsfonts}
\usepackage{amssymb}
\usepackage{amsmath}
\usepackage[dvips,left=1.2in, right=1.2in, top=1.2in, bottom=1in]{geometry}

\setcounter{MaxMatrixCols}{10}

\newtheorem{theorem}{Theorem}

\newtheorem{definition}[theorem]{Definition}

\newtheorem{proposition}[theorem]{Proposition}
\newtheorem{remark}[theorem]{Remark}

\newenvironment{proof}[1][Proof]{\noindent \textbf{#1.} }{\  \rule{0.5em}{0.5em}}
\begin{document}

\title{Competition in fund management and forward relative performance criteria\footnote{A considerable part of this work is part of the Ph.D.~Thesis (2017) of the second author (\cite{Gen17}) and first appeared in https://ssrn.com/abstract=2870040. The work has been presented at conferences and seminars (Oxford, Brown University and National and Technical University of Athens). The authors would like to thank the participants for their
valuable comments and suggestions.}}

\author{Michail Anthropelos\thanks{Department of Banking and Financial Management, University of Piraeus; anthropel@unipi.gr.}, \ Tianran Geng\thanks{Boston Consulting Group (BCG), Southborough, MA;
iyagtr@gmail.com.} \ and Thaleia Zariphopoulou\thanks{Departments of Mathematics and
IROM, The University of Texas at Austin and the Oxford-Man Institute of
Quantitative Finance; zariphop@math.utexas.edu.}}
\maketitle

\begin{abstract}
In an It\^{o}-diffusion market, two fund managers trade under relative performance concerns. For both the asset specialization and diversification settings, we analyze the passive and competitive cases. We measure the performance of the managers' strategies via forward relative performance criteria, leading to the respective notions of forward best-response criterion and forward Nash equilibrium. The motivation to develop such criteria comes from the need to relax various crucial, but quite stringent, existing assumptions - such as, the a priori choices of both the market model and the investment horizon, the commonality of the latter for both managers as well as the full a priori knowledge of the competitor's policies for the best-response case. We focus on locally riskless criteria and deduce the random forward equations. We solve the CRRA cases, thus also extending the related results in the classical setting. An important by-product of the work herein is the development of forward performance criteria for investment problems in It\^{o}-diffusion markets under the presence of correlated random endowment process for both the perfectly and the incomplete market cases.
\end{abstract}

\newpage

\section{Introduction}

Relative performance is of tantamount importance in both the mutual and
hedge fund management industries. It impacts a variety of factors spanning
from a company's reputation and net cash inflows to the incentive
structure and promotion schedule for its managers. Such facts have been very well
documented in the finance practice and have been extensively studied in
academic research (see, among others, \cite{ChePen09, Chevalier99, Gallaher06, KemRueTan09, ST98}).

While classifying the various kinds patterns of relative performance among
managers is rather complex (classification by sectors, asset riskiness,
market conditions, business cycles and others), there is a prevailing
dichotomy based on \textit{asset specialization} or \textit{asset
diversification}. In the former, the competing managers specialize in
distinct asset classes while, in the latter, they invest in common ones.

Asset specialization stems from a variety of reasons like familiarity with a
certain sector, reduction of costs to enhance knowledge of new stocks,
trading costs and constraints (\cite{Bre75}), liquidity costs (\cite{Wagner11}), and ambiguity aversion (\cite{CaoHenWan05, MukTall01}). The above evidence has been well established
in the empirical literature (\cite{CM99, FerKesMigRam13, K05}) and has been incorporated in several theoretical models (\cite{AheWilWar04, FrePot91, Merton87, VanNieVeld09}; see also \cite{BH08,CM99, K05,
Merton87, MV07, UW03, vanB08, Wagner11}).

In asset diversification, the motivation is mainly to increase the net money inflows from clients (\cite{Basak07, CK11}). This setting is also more suitable to
model relative performance concerns against a given benchmark portfolio
(typically, a mix of asset classes). In a different direction, asset
diversification also occurs in delegated portfolio management, where the
role of one manager is replaced by the client (\cite{Kaniel17}). In a
related family of models, it also appears in the so-called \textquotedblleft
catching up with the Joneses\textquotedblright \ literature (see \cite%
{Abel90, Gomez09}).

Relative performance has been also considered in terms of how each manager reacts to the performance of a competitor. This interaction can be \textit{%
passive (best-response)} in that the manager takes the competitors' policies as given (arbitrary but fixed) and trades accordingly, without any further interaction (see, among others, \cite{ChePen09, KemRueTan09, KoskPon99}). On the other hand, interaction may be also \textit{competitive}, when managers compete
with each other dynamically while investing among the various accounts (\cite{Chevalier99, Gallaher06}).

Whether managers compete within the same or different asset classes and/or
interact in a passive or competitive manner, there are common underlying
assumptions that limit the generality and applicability of the existing
studies. The aim herein is to revisit some of these assumptions, propose an
alternative approach, study the related optimization problems and develop a
comparative study with previous works. We are motivated to do so not only by
theoretical and conceptual arguments but, also, by various recent empirical
works that point out to strong dependencies of the observed policies to
dynamically evolving factors, a dependency that cannot be explained in
traditional settings; see, for example, \cite{HuaSiaCle11, KemRueTan09}, where the effects of the current (and possibly non-anticipated) phase of the market on the managers' behavior is discussed.

The first such assumption, ubiquitous for solving all underlying expected
utility problems, is that the market model is a priori chosen for the entire
duration of the investment activity (see, for example, \cite{Basak15, BieArnReis17}). This
is, however, rather unrealistic since model error and model decay always
occur. Of course, genuinely dynamic model revision may be incorporated (for
example, in the context of adaptive control), but
then, intertemporal consistency is violated. We also note that even in the
popular robust case, widely used to remedy model uncertainty and ambiguity,
there is a stringent underlying assumption that the plausible family of
models is itself a priori chosen. Similar restrictions are also present in
filtering, in that the associated observation process is also pre-chosen.

The second assumption is related to the investment horizon choice. In all
existing works, it is assumed that the horizon is i)\ a single one (finite
or infinite), ii)\ a priori chosen and iii) common across competitors (see, for example, \cite{Basak15, BieArnReis17, LacZar19}). In practice, however, this is not
the case. While it is customary for managers to report their performance at
common standardized time intervals (e.g. quarterly, annually), they almost
always have internal sub-horizons that depend on company-related factors and
which are themselves difficult to model. Furthermore, even if a common
horizon is a priori chosen, the investment activity does not stop at this
specific pre-assigned time, as managers always roll their positions from one
investment horizon to the next. One could then argue that managers apply the
same (or very similar) goals for the upcoming period. This, however, is 
\textit{not} supported by existing empirical evidence which shows that
managers always adapt their goals in a rather complex manner, depending on
realized losses and gains, new upcoming (frequently unpredictable) market
conditions and others (see, for example, \cite{Abel90, BarHuaOde16, BerBins, BodSim16, DonFenSad19}).

The third assumption is related to managers' interaction. It is always
assumed that each manager has full, and for the entire investment horizon,
knowledge of the competitor's policy, in that she knows the stochastic
process that models the investment policy she competes against. This
modeling input is needed in order to solve all associated expected utility
problems but as an assumption, it is quite unrealistic. Indeed, not only the
manager cannot have such foreknowing skills for her competitor, the
competitor herself might not a priori know how her own strategy will be
changing as the market enfolds.

Herein, we propose a new framework aiming to remedy some of the shortcomings
of these three stringent assumptions. We make no specific assumptions about
the market model besides a weak structural one that the asset prices are
It\^o-diffusion processes whose coefficients adapt to the current information
(see (\ref{stocks})). We also make no specific assumption about any
pre-specified investment horizon, allowing for each manager to invest till
personalized discretionary times. Finally, we make no assumption on a priori
choosing the stochastic process that yields the competitor's policy. Rather,
we allow (besides mild integrability conditions) this policy to be
dynamically revealed to her competitor. For tractability, we only consider
the case with two managers. The general case for $N$-managers as well as the
mean field game limit are left for future research\footnote{One may consider the case of infinite competitors and build the notion of a ``forward'' mean-field game (MFG). However, constructing such a notion is not immediate as various concepts might not ``carry over'', especially when there is common noise and/or the forward MFG performance process has its own volatility in a general It\^{o}-diffusion setting. To date, the proper definition of a forward MFG has not been produced, and neither the convergence of the forward finite game to the forward MFG. Formally, one may mimic the definitions herein (see also \cite{Gen17}) and the ones in \cite{LacZar19} for the classical case, and calculate a special solution within the CARA functions (see \cite{ReiPla20}).}.

The new framework is built on extensions of the so-called \textit{forward}
performance criteria. Such criteria, introduced by one of the authors and
M.~Musiela, and further developed by others (see \cite{Musiela-first, MZ-QF} and \cite{Zitkovic}), are
modeled as stochastic processes, say $\left( U\left( x,t\right) \right)
_{t\geq 0},$ that adapt to the market information, are (local)
supermartingales along all admissible policies and (local) martingales along
an optimal policy. To characterize forward criteria in such markets, a
stochastic PDE was proposed in \cite{MZ10a}. Depending on the choice of
its volatility and structural parametrizations, various forms of $U(x,t)$
have been studied (see, among others, \cite{AvaShkSir18, Liang, Shkolnikov}). However,
several questions remain open as the underlying stochastic optimization
problems are ill-posed, fully non-linear and degenerate.

To build forward criteria that allow for interaction - passive or
competitive - between two decision makers, we proceed as follows. Let us
assume that each manager uses admissible policies $\alpha $ and $\beta ,$
generating wealths $X_{1}^{\alpha }$ and $X_{2}^{\beta }$.

For the case of best-response, we introduce the \textit{best-response} 
\textit{forward }criterion for manager $1$ as a process $%
U_{1}(x_{1},x_{2},t;\beta )$ such that $U_{1}(X_{1}^{\alpha },X_{2}^{\beta
},t;\beta )$ is a (local) supermartingale for each admissible policy $\alpha 
$ and becomes a (local) martingale, $U_{1}(X_{1}^{\alpha ^{\ast
}},X_{2}^{\beta },t;\beta ),$ along an optimal $\alpha ^{\ast }$. We stress
that in contrast to \textit{all }classical cases, the competitor's policy
process $\beta $ is \textit{not} pre-assumed. Rather it is being revealed in ``real time" and, in turn, the performance criterion $U_{1}(x_{1},x_{2},t;%
\beta )$ adapts to it dynamically. The best-response forward criterion for
manager $2,$ $U_{2}(x_{1},x_{2},t;\alpha ),$ is defined analogously, and
with the competitor's policy process $\alpha $ also not a priori known.

For the case of competitive interaction between the managers, we introduce a 
\textit{forward Nash equilibrium} criterion, consisting of two pairs\textit{%
\ }$\left( U_{1}(x_{1},x_{2},t;\beta )_{t\geq 0},\left( \alpha _{t}^{\ast
}\right) _{t\geq 0}\right) $ and $\left( U_{2}(x_{1},x_{2},t;\alpha )_{t\geq
0},\left( \beta _{t}^{\ast }\right) _{t\geq 0}\right) $ such that $%
U_{1}(X_{1}^{\alpha ^{\ast }},X_{2}^{\beta },t;\beta )$ and $%
U_{1}(X_{1}^{\alpha },X_{2}^{\beta ^{\ast }},t;\alpha )$ are (local)
supermartingales, and $U_{1}(X_{1}^{\alpha ^{\ast }},X_{2}^{\beta ^{\ast
}},t;\alpha ^{\ast })$ and $U_{2}(X_{1}^{\alpha ^{\ast }},X_{2}^{\beta
^{\ast }},t;\beta ^{\ast })$ are (local) martingales.

For each kind of interaction, based on best response or on competition, we
analyze both the asset specialization and the asset diversification cases.
Herein, we focus on forward criteria that are locally riskless processes,
namely, of the form 
\begin{equation}
dU_{1}(x_{1},x_{2},t;\beta )=b_{1}\left( x_{1},x_{2},t;\beta \right) dt\text{
\ and \ }dU_{2}(x_{1},x_{2},t;\alpha )=b_{2}\left( x_{1},x_{2},t;\alpha
\right) dt\text{,}  \label{U-general}
\end{equation}%
for suitable adapted processes $\left( b_{1}\left( x_{1},x_{2},t;\beta
\right) \right) _{t\geq 0}$ and $\left( b_{2}\left( x_{1},x_{2},t;\alpha
\right) \right) _{t\geq 0}$. We choose this class because, in the absence of
relative concerns, locally riskless forward criteria were the first to be
extensively analyzed not only because of their tractability but, also, for
the valuable intuition in terms of num\'{e}raire choice, time monotonicity of
preferences, dependence on market performance, and others (see \cite{MZ10a} for details).

Throughout, we model the market having one riskless bond and two risky
securities representing proxies of two asset classes. Such proxies have been
consistently used in the literature (see, for example, \cite{Basak15, HeKri13, Koijen14}).
We model their prices as It\^{o}-diffusion processes (cf.~(\ref{stocks})) but we
stress, once more, that their coefficients are not a priori chosen but,
rather, become known gradually, infinitesimally in time, as the market
evolves.

When managers invest in isolation, their wealths evolve as in (\ref{wealths}%
) and in (\ref{wealth-1-div}), (\ref{wealth-2-div}), for the asset
specialization and diversification cases, respectively. Under relative
performance, the competitor's wealth needs to be incorporated. One way to do
this was proposed in \cite{Basak14, Basak15, Basak07, Koijen14}), which
we also adopt herein. Namely, we introduce the \textit{relative wealth }%
processes\textit{\ }$( \tilde{X}_{1,t}) _{t\geq 0}$ and $( 
\tilde{X}_{2,t}) _{t\geq 0},$ with 
\begin{equation*}
\tilde{X}_{1}:=\frac{X_{1}}{X_{2}^{\theta _{1}}}\text{ \  \  \ and \  \  \ }%
\tilde{X}_{2}:=\frac{X_{1}}{X_{2}^{\theta _{2}}},
\end{equation*}%
where the \textit{competition biases} $\theta _{1},\theta _{2}\in \left[ 0,1%
\right] $ model the degree of relative performance considerations.

The limiting case $\theta _{1}=0$ (resp. $\theta _{2}=0)$ expresses that
manager $1$ (resp. $2)$ is not at all concerned with the output of the
opponent. The other limiting case, $\theta _{1}=1$ (resp. $\theta _{2}=1)$
corresponds to the traditional relative performance in terms of a benchmark
(such as S\&P500 index, collection of index funds, and others; see, for example, the
related discussion in \cite[Section 1]{Basak15}).

The form of the relative state dynamics $\tilde{X}_{1}$ and $\tilde{X}_{2}$,
see (\ref{relative-dynamics1}), (\ref{relative-dynamics2}) for the asset
specialization and (\ref{X-1-div}), (\ref{X-2-div})\ for the asset
diversification cases, prompts us to introduce \textit{``personalized"
fictitious} markets and define the relevant forward criteria within.
Informationally, the original and these virtual markets do not differ but
the forward criteria may have different characteristics, depending on the
choice of the modified state variables.

The above choice of $\tilde{X}_{1}$ and $\tilde{X}_{2}$ suggests to develop
criteria of the reduced scaled form, namely, 
\begin{equation}
U_{1}(x_{1},x_{2},t;\beta )=V_{1}\left( \frac{x_{1}}{x_{2}^{\theta _{1}}}%
,t;\beta \right) \text{ \  \ and \ }U_{2}(x_{1},x_{2},t;\alpha )=V_{2}\left( 
\frac{x_{2}}{x_{1}^{\theta _{2}}},t;\alpha \right) ,  \label{U-reduced}
\end{equation}%
$\tilde{x}_{1}=x_{1}/x_{2}^{\theta _{1}}$, $\tilde{x}_{2}=x_{2}/x_{1}^{\theta _{2}},$ for suitable processes $\left( V_{1}\left( \tilde{x}%
_{1},t;\beta \right) \right) _{t\geq 0}$ and $\left( V_{2}\left( \tilde{x}%
_{2},t;\alpha \right) \right) _{t\geq 0}.$ Other forms of relative criteria
may be introduced depending on admissibility domains, type of risk
preferences, etc. (see, for example, additive cases in \cite{BieArnReis17, EspTou15, FreReis11, LacZar19}).

In the asset specialization case, neither manager may invest in the asset
class of the competitor. As a result, the market (be the original or the
fictitious ones) is incomplete. Forward criteria for incomplete markets have
been developed before but only when incompleteness comes exclusively from
imperfectly correlated stochastic factors affecting the stocks' dynamics
(see, among others, \cite{Anthro, LeuSirZar12, Liang, Shkolnikov}). Herein, however, the kind of incompleteness is different, for it is
generated by the specialization constraints. These constraints alter the
relative wealth processes $\tilde{X}_{1}$ and $\tilde{X}_{2}$ in a way that
the related dynamics may be interpreted as either including non-zero
constrained allocations (cf.~(\ref{relative-dynamics1}) and (\ref%
{relative-dynamics2})) or, alternatively, having a stream with imperfectly
correlated return (cf.~(\ref{relative-metric-1}) and (\ref{relative-metric-2}%
)). The former interpretation is closer to the original formulation herein.
The latter has a different scope. It shows how the forward criteria under
asset specialization may be used to define analogous criteria for problems
with (imperfectly correlated) random endowment process (also called stochastic income stream). This is a new class of forward criteria, not been
considered so far.

We analyze both the best-response and the competition cases, and introduce
the corresponding \textit{best-response forward} relative performance
criteria. The definitions extend the original ones in \cite{Musiela-first}. We,
in turn, derive a random PDE (cf.~(\ref{PDE-1})) that the (locally riskless)
criterion is expected to satisfy. Its coefficients are adapted processes,
and depend on both the market dynamics and the competitor's policies. In
general, equation (\ref{PDE-1})\ is not tractable unless for the homothetic
class, which we solve. Nevertheless, its solution is used to derive and
represent the optimal policies in a stochastic feedback form (see (\ref%
{optimal-passive-general})).

When dynamic competition is allowed, this naturally leads to the new concept
of \textit{forward Nash equilibrium,} which we introduce in Definition \ref{definition of forward Nash}. To derive the equilibrium policies, one needs to solve a
system of equations, in general intractable due to interdependent
nonlinearities. The homothetic case is solvable and we provide the relevant
policies. Their form resembles the ones in the log-normal case studied in \cite{Basak15}, but it is now derived, under the new criteria, in the
general It\^{o}-diffusion setting.

In the asset diversification case, both managers invest in a common market.
Their relative performance concerns distort the original wealth processes
(cf. (\ref{X-1-div}) and (\ref{X-2-div})) which, as in the previous case,
leads to two distinct personalized fictitious markets, each depending on the
individual competition parameter and the policy of the opponent. Now,
however, each of these markets is complete as investment is allowed in both
stocks with modified risk premia. Forward criteria may, in turn, be defined
as in the asset specialization case and we focus again on locally riskless
ones. The completeness of the markets enables us to extend the results of \cite{MZ10a} and characterize both the relative performance and Nash
equilibrium criteria, their optimal wealth and investment policies in full
generality. The special case of homothetic criteria is also analyzed.

Conceptually, the analyses of the asset specialization and the asset diversification cases are rather similar, in terms of the associated fictitious markets and the optimality criteria. The fundamental difficulty
is in their (in)completeness which affects the tractability of the problem
and the form of the optimal policies. A key difference is that the locally
riskless forward criteria in the asset diversification case are always
time-decreasing while, in the asset specialization case, they are not. We
further elaborate on this later on.

We conclude the introductory section mentioning that an underlying assumption herein - which is also
widely present in the classical literature - is that the managers have
common information for both the market and the opponent's strategies; we
refer the reader to \cite{Basak14, BerBins} for discussion of supporting
arguments for this assumption. While the access to such information is much
more realistic in our setting (as it occurs in ``real-time"), the fact that both managers share common access to it is, in our view, a rather stringent requirement. As the focus herein is to develop the new, forward framework with relative performance, we also adopt this assumption. We provide ideas
how to relax it and future research in this direction in section 4.

The paper is organized as follows. In section 2 we present the asset
specialization case and analyze the forward best-response and the forward
Nash equilibrium cases. In section 3, we analyze the asset diversification
case while in section 4 we conclude and comment on possible extensions.

\section{Asset specialization and forward competition}\label{sec:spec}

The market consists of one (locally) riskless asset and two risky
securities, representing proxies of two distinct asset classes. The prices
of the risky securities, $\left( S_{1,t}\right) _{t\geq 0}$ and $\left(
S_{2,t}\right) _{t\geq 0}$ are It\^{o}-diffusions solving 
\begin{equation}
\frac{dS_{1}}{S_{1}}=\mu _{1}dt+\sigma _{1}dW_{1}\text{ \ and \ }\frac{dS_{2}%
}{S_{2}}=\mu _{2}dt+\sigma _{2}dW_{2},  \label{stocks}
\end{equation}%
with $S_{1,0},S_{2,0}>0$. The processes $\left( W_{1,t}\right) _{t\geq
0},\left( W_{2,t}\right) _{t\geq 0}$ are standard Brownian motions on a
filtered probability space $(\Omega ,\mathcal{F},(\mathcal{F}_{t})_{t\geq 0},%
\mathbb{P}),$ with correlation coefficient $\rho \in \left( -1,1\right) $
and $\mathcal{F}_{t}$ being the filtration generated by $\left(
W_{1},W_{2}\right) $. The market coefficients $\left( \mu _{i,t}\right)
_{t\geq 0}$,$\left( \sigma _{i,t}\right) _{t\geq 0},$ $i=1,2,$ are $\mathcal{%
F}_{t}$-adapted processes with values in $\mathbb{R}$ and $\mathbb{R}_{+}$,
respectively. The riskless asset is a money market account $\left(
B_{t}\right) _{t\geq 0}$ offering positive interest rate $\left(
r_{t}\right) _{t\geq 0}$, an $\mathcal{F}_{t}$-adapted process.

We denote this \textit{original} market by $\mathcal{M}=\left(
B,S_{1},S_{2}\right) .$ The related market price of risk processes, $\left(
\lambda _{1,t}\right) _{t\geq 0}$ and $\left( \lambda _{2,t}\right) _{t\geq
0},$ are given by 
\begin{equation}
\lambda _{1}=\frac{\mu _{1}-r}{\sigma _{1}}\text{ \  \  \ and \  \ }\lambda
_{2}=\frac{\mu _{2}-r}{\sigma _{2}},  \label{sharpe}
\end{equation}%
and assumed to be bounded processes, $0<c\leq \lambda _{1},\lambda _{2}\leq
C<\infty ,$ $t\geq 0,$ for some (possibly deterministic) constants $c,C$.

In this market environment, we consider two asset managers, indexed by $%
i=1,2.$ They \textit{specialize }in assets $S_{1}$ and $S_{2},$
respectively, in that manager $1$ $($resp. $2)$ trades between the riskless
asset and $S_{1}$ $($resp. $S_{2})$. However, both managers have access to
the common filtration $\left( \mathcal{F}_{t}\right) _{t\geq 0}$ (as for for
example in [11] and [12]).

We denote by $\left( X_{1,t}\right) _{t\geq 0},\left( X_{2,t}\right) _{t\geq
0}$ the wealths of manager $1$ and $2$ and by $\left( \alpha _{t}\right)
_{t\geq 0}$ and $\left( \beta _{t}\right) _{t\geq 0}$ the corresponding
self-financing strategies in assets $S_{1}$ and $S_{2}.$ Then, (\ref{stocks}%
) yields 
\begin{equation}
\frac{dX_{1}}{X_{1}}=\sigma _{1}\alpha \left( \lambda _{1}dt+dW_{1}\right) 
\text{ \  \  \ and \ }\frac{dX_{2}}{X_{2}}=\sigma _{2}\beta \left( \lambda
_{2}dt+dW_{2}\right) ,  \label{wealths}
\end{equation}%
with $X_{i,0}=x_{i}>0,$ $i=1,2$; herein, $X_{1},X_{2},\alpha ,\beta $ are
expressed in discounted (by the riskless asset) units.

The set of admissible policies $\mathcal{A}_{1}$ and $\mathcal{A}_{2},$ of
manager $1$ and $2,$ respectively, are defined for $\left( \pi _{t}\right)
_{t\geq 0}=\left( \alpha _{t}\right) _{t\geq 0},\left( \beta _{t}\right)
_{t\geq 0},$ and $i=1,2,$ 
\begin{equation}
\mathcal{A}_{i}=\left \{ \pi :\pi _{t}\in \mathcal{F}_{t},\text{ }\, \mathbb{E}\left[ \int_{0}^{t}\sigma _{i,s}^{2}\pi _{s}^{2}ds\right] <\infty \text{ \
and }X_{i}>0,\text{ }t>0\right \} .  \label{A}
\end{equation}%
The wealth positivity constraint is in accordance to what is frequently observed in the asset management industry (for instance, mutual funds cannot have negative wealth). The measurability of the individual investment
policies reflects the access by both managers to the common information
generated by $\mathcal{F}_{t}$ (see section 4 for a discussion on this
assumption).

We work with \textit{relative wealth }processes with competition parameters $%
\theta _{1},\theta _{2}\in \left( 0,1\right] $\textit{\ }following the
framework of \cite{Basak14, Basak15, vanB08}). Specifically, if
manager $2$ follows an arbitrary strategy $\beta \in \mathcal{A}_{2}$
generating wealth $X_{2}$, the \textit{relative wealth} of manager $1,$ $%
( \tilde{X}_{1,t}) _{t\geq 0},$ is defined as%
\begin{equation}
\tilde{X}_{1}:=\frac{X_{1}}{X_{2}^{\theta _{1}}},  \label{metric1}
\end{equation}%
with $X_{1}$ and $X_{2}$ solving (\ref{wealths}). Symmetrically, the
relative wealth of manager $2$, $(\tilde{X}_{2,t}) _{t\geq 0},$
given an arbitrary strategy $\alpha \in \mathcal{A}_{1}$ of manager $1$
generating wealth $X_{1},$ is defined as 
\begin{equation}
\tilde{X}_{2}:=\frac{X_{2}}{X_{1}^{\theta _{2}}}.  \label{metric2}
\end{equation}

As discussed in the introduction, we introduce three new modeling elements.
Firstly, while we make the structural model assumption (\ref{stocks}), we do
not pre-choose (at initial time) the processes $\mu _{i}$, $\sigma
_{i},i=1,2.$ Secondly, in a similar manner, we do \textit{not} assume that
the competitors' policies, $\alpha $ and $\beta ,$ are a priori chosen
stochastic processes. Rather, each manager learns the market coefficients
and the opponent's strategy as time enfolds. Thirdly, there is no pre-chosen
investment horizon.

The biased benchmark processes $( X_{1,t}^{\theta
_{2}}) _{t\geq 0}$ and $( X_{2,t}^{\theta _{1}}) _{t\geq 0}$
solve, for $\theta _{1},\theta _{2}\in \left( 0,1\right] ,$ 
\begin{equation*}
\frac{dX_{1}^{\theta _{2}}}{X_{1}^{\theta _{2}}}=\sigma _{1}\theta
_{2}\alpha _{1}\left( \left( \lambda _{1}-\frac{1}{2}\left( 1-\theta
_{1}\right) \sigma _{1}\alpha \right) dt+dW_{1}\right)
\end{equation*}%
and 
\begin{equation*}
\frac{dX_{2}^{\theta _{1}}}{X_{2}^{\theta _{1}}}=\sigma _{2}\theta _{1}\beta
_{2}\left( \left( \lambda _{2}-\frac{1}{2}\left( 1-\theta _{1}\right) \sigma
_{2}\beta \right) dt+dW_{2}\right) .
\end{equation*}%
In turn, the relative wealth $\tilde{X}_{1}$ satisfies

\begin{equation}
\frac{d\tilde{X}_{1}}{\tilde{X}_{1}}=\sigma _{1}\alpha \left( \tilde{\lambda}%
_{1,1}dt+dW_{1}\right) -\sigma _{2}\theta _{1}\beta \left( \text{\ }\tilde{%
\lambda}_{1,2}dt+dW_{2}\right) ,\text{ }  \label{relative-dynamics1}
\end{equation}%
and $\tilde{X}_{1,0}=x_{1}/x_{2}^{\theta _{1}},$ $x_{1},x_{2}>0,$
with the processes $( \tilde{\lambda}_{1,1,t}) _{t\geq 0}$ and $( \tilde{\lambda}_{1,2,t}) _{t\geq 0},$ 
\begin{equation}
\tilde{\lambda}_{1,1}:=\lambda _{1}-\rho \sigma _{2}\theta _{1}\beta \text{
\  \ and \ }\tilde{\lambda}_{1,2}:=\lambda _{2}-\frac{1}{2}\sigma _{2}\left(
1+\theta _{1}\right) \beta .  \label{lamda-1}
\end{equation}%
Symmetrically, the relative wealth $\tilde{X}_{2}$ satisfies 
\begin{equation}
\frac{d\tilde{X}_{2}}{\tilde{X}_{2}}=-\sigma _{1}\theta _{2}\alpha \left( 
\tilde{\lambda}_{2,1}dt+dW_{1}\right) +\sigma _{2}\beta \left( \tilde{\lambda%
}_{2,2}dt+dW_{2}\right) ,\text{ }  \label{relative-dynamics2}
\end{equation}%
and $\tilde{X}_{2,0}=\frac{x_{2}}{x_{1}^{\theta _{2}}},$ $x_{1},x_{2}>0,$
with the processes $( \tilde{\lambda}_{2,1,t}) _{t\geq 0}$ and $%
( \tilde{\lambda}_{2,2,t}) _{t\geq 0}$,%
\begin{equation} 
\text{\ }\tilde{\lambda}_{2,1}:=\lambda _{1}-\frac{1}{2}\sigma _{1}\left(
1+\theta _{2}\right) \alpha \text{ \  \  \ and \  \ }\tilde{\lambda}%
_{2,2}:=\lambda _{2}-\rho \sigma _{1}\theta _{2}\alpha .  \label{lambda-2}
\end{equation}

We may now interpret the relative wealth dynamics (\ref{relative-dynamics1}%
)\ as follows. In the original market $\mathcal{M}=\left(
B,S_{1},S_{2}\right) ,$ manager $1$ chooses (proportional) risky allocations 
$\left( \alpha ,0\right) $ in securities $S_{1}$ and $S_{2},$ due to
specialization. In the relative formulation, it is as if she invests in a
``personalized" \textit{fictitious} market\footnote{%
The superscript ``s" corresponds to specialization.} $\mathcal{\tilde{M}}%
_{1}^{s}:=\left( B,\tilde{S}_{1,1},\tilde{S}_{1,2}\right) $ with (pseudo)
stocks $( \tilde{S}_{1,1,t}) _{t\geq 0},$ $( \tilde{S}%
_{1,2,t}) _{t\geq 0}$ solving (in discounted units) 
\begin{equation}
\frac{d\tilde{S}_{1,1}}{\tilde{S}_{1,1}}=\sigma _{1}\left( \tilde{\lambda}%
_{1,1}dt+dW_{1}\right) \text{ \  \  \ and \  \ }\frac{d\tilde{S}_{1,2}}{\tilde{S%
}_{1,2}}=\sigma _{2}\left( \text{\ }\tilde{\lambda}_{1,2}dt+dW_{2}\right) ,
\label{pseudo-stocks-1}
\end{equation}%
with modified Sharpe ratios $\tilde{\lambda}_{1,1}$ and \ $\tilde{\lambda}%
_{1,2}$ defined in (\ref{lamda-1}). In this virtual market, the original
specialization constraint is not binding, as the manager may now invest in
both risky securities, $\tilde{S}_{1,1}$ and $\tilde{S}_{1,2},$ with
respective proportional allocations $\left( \alpha ,-\theta _{1}\beta
\right) ,$ with only $\alpha $ being controlled by manager $1$. The
constrained allocation $-\theta _{1}\beta $ depends on\textit{\ }both
managers' characteristics, statically on the bias parameter $\theta _{1}$
(chosen by manager $1)$ and dynamically on $\beta $ (chosen by manager $2)$.

Alternatively, we may view (\ref{relative-dynamics1}) as wealth dynamics in
market $\mathcal{\tilde{M}}_{1}^{s}$ where manager $1$ invests in the
riskless security $B$ and chooses ratio $\alpha $ to allocate in the
fictitious stock $\tilde{S}_{1,1},$ while he receives a \textit{process of random endowment} returns $( \tilde{Y}_{1,t}) _{t\geq 0}$\textit{, }%
namely, 
\begin{equation}
\text{ }\frac{d\tilde{X}_{1}}{\tilde{X}_{1}}=\sigma _{1}\alpha \left( \tilde{%
\lambda}_{1,1}dt+dW_{1}\right) +dY_{1},  \label{relative-metric-1}
\end{equation}%
with 
\begin{equation*}
dY_{1}=-\sigma _{2}\theta _{1}\beta \left( \text{\ }\tilde{\lambda}%
_{1,2}dt+dW_{2}\right) =-\theta _{1}\beta \text{\ }\frac{d\tilde{S}_{1,2}}{%
\tilde{S}_{1,2}},
\end{equation*}%
with $Y_{1,0}=0.$ Note that $Y_{1}$ is driven only by $W_{2}$ and its
dynamics do not depend on $\lambda _{1},\sigma _{1},\alpha .$

Analogous interpretations may be derived for manager $2,$ who now invests in
the ``personalized" fictitious market $\mathcal{\tilde{M}}_{2}^{s}:=\left( B,%
\tilde{S}_{2,1},\tilde{S}_{2,2}\right) $ with (pseudo) stocks $( \tilde{%
S}_{2,1,t})_{t\geq 0}$ and $( \tilde{S}_{2,2,t}) _{t\geq
0},$ solving 
\begin{equation}
\frac{d\tilde{S}_{2,1}}{\tilde{S}_{2,1}}=\sigma _{1}\left( \tilde{\lambda}%
_{2,1}dt+dW_{1}\right) \text{ \  \  \ and \  \ }\frac{d\tilde{S}_{2,2}}{\tilde{S%
}_{2,2}}=\sigma _{2}\left( \text{\ }\tilde{\lambda}_{2,2}dt+dW_{2}\right) ,
\label{pseudo-stocks-2}
\end{equation}%
with modified Sharpe ratios $\tilde{\lambda}_{2,1}$ and \ $\tilde{\lambda}%
_{2,2}$ as in (\ref{lambda-2}). We may then interpret (\ref%
{relative-dynamics2}) as the outcome of investing ratio $\beta $ in stock $%
\tilde{S}_{2,2}$ while maintaining (ratio) allocation $-\theta _{2}\alpha $
in stock $\tilde{S}_{2,1}.$ Alternatively, 
\begin{equation}
\frac{d\tilde{X}_{2}}{\tilde{X}_{2}}=\sigma _{2}\beta \left( \tilde{\lambda}%
_{2,2}dt+dW_{2}\right) +dY_{2},  \label{relative-metric-2}
\end{equation}%
with $\left( Y_{2,t}\right) _{t\geq 0}$ solving%
\begin{equation*}
dY_{2}=-\sigma _{1}\theta _{2}\alpha \left( \tilde{\lambda}%
_{2,1}dt+dW_{1}\right) =-\theta _{2}\alpha \frac{d\tilde{S}_{2,1}}{\tilde{S}%
_{2,1}},
\end{equation*}%
with $Y_{2,0}=0.$

Clearly, the personalized fictitious markets $\mathcal{\tilde{M}}_{1}^{s}$
and $\mathcal{\tilde{M}}_{2}^{s}$ do not coincide due to the asymmetry in
both the competition parameters $\theta _{1}$ and $\theta _{2},$ and the
competitors' allocations $\alpha $ and $\beta $. As the original market $%
\mathcal{M},$ the specialization constraints make both these markets \textit{%
incomplete.} Note also that, in formationally, the markets $\mathcal{M},\mathcal{\tilde{M}}%
_{1}^{s}$ and $\mathcal{\tilde{M}}_{2}^{s}$ do not differ but, conceptually,
forward performance criteria are developed within $\mathcal{\tilde{M}}%
_{1}^{s}$ and $\mathcal{\tilde{M}}_{2}^{s}$.

\subsection{Best-response forward relative performance criterion}

Each manager invests between the riskless asset and the stock in which she
specializes. She also competes with her opponent \textit{passively,} in the
sense that she observes and takes into account the competitor's policy but
without interacting with him. In contrast to all existing settings, however,
the competitor's policy is not a priori modeled, it is only taken to be a
process in the admissible set $\mathcal{A}_{2},$ and is being revealed by
the competitor gradually, as time moves. To model, measure and optimize in
this relative performance setting, we first introduce a suitable criterion.
It extends the original forward criterion, proposed by Musiela and
Zariphopoulou (see \cite{Musiela-first, Musiela-Carmona}) and further
developed by them and others (see, \cite{ElKaroui, MZ-QF, MZ10a, MZ3, Nad-T,
Nad-Z, Zitkovic}).

Throughout, we will be working with the following set of random functions in
the domain $\mathbb{D}=\mathbb{R}_{+}\times \mathbb{R}_{+}.$

\begin{definition}
Let $\mathcal{U}$ be the set of random functions $u\left( z,t\right) ,$ $%
(z,t)\in \mathbb{D}$, such that, for each $t\geq 0$ and $\mathbb{P}$-a.s.,
the mapping $z\rightarrow u\left( z,t\right) $ is strictly concave and
strictly increasing, and $u\left( z,t\right) \in \mathcal{C}^{4,1}$. \textbf{%
\ }
\end{definition}

\begin{definition}
\label{definition of relative forward} Let policy $\beta \in \mathcal{A}%
_{2}. $ An $\mathcal{F}_{t}$-adapted process $\left( V_{1}(\tilde{x}%
_{1},t;\beta )\right) _{t\geq 0},$ $\tilde{x}_{1}\geq 0,$ is called a
best-response forward relative performance criterion for manager 1 if the
following conditions hold:

i) For each $t\geq 0$, $V_{1}(\tilde{x}_{1},t;\beta )\in \mathcal{U}$ $a.s.$

ii)\ For each $\alpha \in \mathcal{A}_{1},$\textit{\ }$V_{1}(\tilde{X}%
_{1},t;\beta )$ is a (local) supermartingale, where $\tilde{X}_{1}$ solves (%
\ref{relative-dynamics1}) with $\alpha $ being used.

iii) There exists $\alpha ^{\ast }\in \mathcal{A}_{1}$, such that $%
V_{1}\left( \tilde{X}_{1}^{\ast },t;\beta \right) $ is a (local) martingale,
where $\tilde{X}_{1}^{\ast }$ solves (\ref{relative-dynamics1}) with $\alpha
^{\ast }$ being used.
\end{definition}

Analogously, we define the best-response forward relative performance for
manager $2$, $\left( V_{2}(\tilde{x}_{2},t;\alpha )\right) _{t\geq 0}$, $%
\tilde{x}_{2}\geq 0$ and $\alpha \in \mathcal{A}_{1}$, requiring that $V_{2}(%
\tilde{X}_{2},t;\alpha )$ and $V_{2}(\tilde{X}_{2}^{\ast },t;\alpha )$ are,
respectively, a (local) supermartingale for any $\beta \in \mathcal{A}_{2}$
and a (local) martingale for an optimal $\beta ^{\ast }\in \mathcal{A}_{2}.$
The notational presence of $\beta $ in $V_{1}$ and $\alpha $ in $V_{2}$ is
self-evident.

In the absence of competition and for It\^{o}-diffusion markets, forward
performance criteria have been constructed also as It\^{o}-diffusion
processes (cf.~\cite{MZ-QF}). However, contrary to the classical expected
utility case, their volatility process is an ``investor-specific" modeling
input. For a chosen volatility process, the supermartingality and
martingality properties impose conditions on the drift of the forward
criterion. Under enough regularity, these conditions lead to the forward
performance SPDE (see \cite{MZ3}), which is a fully nonlinear infinite
dimensional equation. Depending on whether the forward process is
path-dependent or a deterministic functional of stochastic factors, the
forward volatility can be chosen to be path- or state-dependent (see, for
example, \cite{Kallblad, ElKaroui, Liang, Nad-T, Nad-Z, Shkolnikov}). In
general, the underlying problems are inherently ill-posed and extra analysis
is required to identify the viable initial conditions (see, for example, 
\cite{Berrier, MZ10a}).

As mentioned in the introduction, we will work with \textit{locally riskless}
(no volatility) performance processes, 
\begin{equation*}
dV_{1}\left( \tilde{x}_{1},t;\beta \right) =b_{1}\left( \tilde{x}%
_{1},t;\beta \right) dt\text{ \  \  \  \ and \  \ }dV_{2}\left( \tilde{x}%
_{2},t;\alpha \right) =b_{2}\left( \tilde{x}_{2},t;\alpha \right) dt,
\end{equation*}%
for some suitably chosen $\mathcal{F}_{t}$-adapted processes $\left(
b_{1}\left( \tilde{x}_{1},t;\beta \right) \right) _{t\geq 0}$ and $\left(
b_{2}\left( \tilde{x}_{2},t;\alpha \right) \right) _{t\geq 0}.$

Next, we provide a characterization result for the best-response forward
performance criterion $V_{1}\left( \tilde{x}_{1},t;\beta \right) $. Similar
results may be derived for manager $2$ and are, thus, omitted. Throughout,
it is assumed that $\rho ^{2}\neq 1,$ as the case $\rho ^{2}=1$ is more
natural for the asset diversification setting.

\begin{proposition}
\label{forward SPDE 1} Let $\beta \in \mathcal{A}_{2},$ $\rho ^{2}\neq 1,$
and $\tilde{\lambda}_{1,1}$ and $\tilde{\lambda}_{1,2}$ as in (\ref{lamda-1}%
). Consider the random PDE 
\begin{equation}
v_{t}-\frac{1}{2}\tilde{\lambda}_{1,1}^{2}\frac{v_{z}^{2}}{v_{zz}}+\frac{1}{2%
}\left( 1-\rho ^{2}\right) \theta _{1}^{2}\sigma _{2}^{2}\beta
^{2}z^{2}v_{zz}+\left( \rho \tilde{\lambda}_{1,1}-\tilde{\lambda}%
_{1,2}\right) \theta _{1}\sigma _{2}\beta zv_{z}=0,\text{ }  \label{PDE-1}
\end{equation}%
for $\left( z,t\right) \in \mathbb{D},$ and assume that a solution $v\left(
z,t\right) \in \mathcal{U}$ exists, for some admissible initial datum $%
v\left( z,0\right) =V_{1}\left( z,0;\beta \right) .$ Furthermore, let the
process $\left( \alpha _{t}^{\ast }\right) _{t\geq 0}$ be given by 
\begin{equation*}
\alpha ^{\ast }=\alpha ^{\ast }(\tilde{X}_{1}^{\ast },t),
\end{equation*}%
with the random function $\alpha ^{\ast }(z,t),$ $\left( z,t\right) \in 
\mathbb{D},$ defined as 
\begin{equation}
\alpha ^{\ast }(z,t)=\frac{\tilde{\lambda}_{1,1}}{\sigma _{1}}%
R_{1}(z,t)+\rho \frac{\sigma _{2}}{\sigma _{1}}\theta _{1}\beta ,
\label{optimal-passive-general}
\end{equation}%
with 
\begin{equation}
R_{1}\left( z,t\right) :=-\frac{v_{z}(z,t)}{zv_{zz}(z,t)},
\label{risktolerance-passive-1}
\end{equation}%
and $( \tilde{X}_{1,t}^{\ast }) _{t\geq 0}$ solving (\ref{wealths}%
)\ with the control process $\alpha ^{\ast }$ being used. If $\tilde{X}%
_{1}^{\ast }$ is well defined and $\alpha ^{\ast }\in \mathcal{A}_{1}$, then
the process 
\begin{equation*}
V_{1}(\tilde{x}_{1},t;\beta ):=v\left( \tilde{x}_{1},t\right) ,
\end{equation*}%
$\tilde{x}_{1}\geq 0,$ is a locally riskless best-response forward relative
performance criterion and the investment strategy $\alpha ^{\ast }$ is
optimal.
\end{proposition}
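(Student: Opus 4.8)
The plan is to verify the three defining properties of a best-response forward relative performance criterion directly, using It\^o's formula applied to the process $V_1(\tilde X_1,t;\beta)=v(\tilde X_1,t)$ where $\tilde X_1$ solves (\ref{relative-dynamics1}) under an arbitrary admissible $\alpha\in\mathcal A_1$. Property (i) is immediate since $v(z,t)\in\mathcal U$ by hypothesis. For (ii) and (iii), the key computation is the drift of $v(\tilde X_1,t)$: applying It\^o to $v(z,t)$ with $z=\tilde X_1$, and recalling from (\ref{relative-dynamics1}) that $d\tilde X_1/\tilde X_1=\sigma_1\alpha(\tilde\lambda_{1,1}dt+dW_1)-\sigma_2\theta_1\beta(\tilde\lambda_{1,2}dt+dW_2)$, the quadratic variation term is $\tfrac12 z^2 v_{zz}\,d\langle \tilde X_1/\tilde X_1\rangle$ with $d\langle\cdot\rangle/dt=\sigma_1^2\alpha^2-2\rho\sigma_1\sigma_2\theta_1\alpha\beta+\sigma_2^2\theta_1^2\beta^2$. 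Collecting terms, the local drift of $v(\tilde X_1,t)$ takes the form
\begin{equation*}
b_1(\tilde X_1,t;\beta)=v_t+\tilde X_1 v_z\bigl(\sigma_1\alpha\tilde\lambda_{1,1}-\sigma_2\theta_1\beta\tilde\lambda_{1,2}\bigr)+\tfrac12\tilde X_1^2 v_{zz}\bigl(\sigma_1^2\alpha^2-2\rho\sigma_1\sigma_2\theta_1\alpha\beta+\sigma_2^2\theta_1^2\beta^2\bigr).
\end{equation*}

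The next step is the pointwise optimization over $\alpha$: for fixed $(z,t)$ the expression above is a concave quadratic in $\alpha$ (since $v_{zz}<0$), maximized at $\alpha^{\ast}(z,t)=-\dfrac{v_z}{\sigma_1 z v_{zz}}\tilde\lambda_{1,1}+\rho\dfrac{\sigma_2}{\sigma_1}\theta_1\beta$, which is exactly (\ref{optimal-passive-general}) once one substitutes the definition (\ref{risktolerance-passive-1}) of $R_1$. Plugging $\alpha^{\ast}$ back in and simplifying, the maximal drift equals the left-hand side of the PDE (\ref{PDE-1}): one checks that the completed-square residual reduces to $v_t-\tfrac12\tilde\lambda_{1,1}^2 v_z^2/v_{zz}+\tfrac12(1-\rho^2)\theta_1^2\sigma_2^2\beta^2 z^2 v_{zz}+(\rho\tilde\lambda_{1,1}-\tilde\lambda_{1,2})\theta_1\sigma_2\beta z v_z$. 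Since $v$ solves (\ref{PDE-1}), this quantity is identically zero; hence for arbitrary $\alpha$ the drift $b_1(\tilde X_1,t;\beta)$ is $\le 0$ (being below the maximum which is $0$), and for $\alpha^{\ast}$ it is $\equiv 0$. The stochastic-integral (martingale) part is locally a true martingale after localization along a suitable sequence of stopping times (using $v\in\mathcal C^{4,1}$ and the admissibility integrability conditions on $\alpha$), so $V_1(\tilde X_1,t;\beta)$ is a local supermartingale for every $\alpha\in\mathcal A_1$, establishing (ii), and $V_1(\tilde X_1^{\ast},t;\beta)$ is a local martingale, establishing (iii), provided $\tilde X_1^{\ast}$ is well defined and $\alpha^{\ast}\in\mathcal A_1$ as assumed.

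I expect the main obstacle to be the algebraic verification that the optimized drift collapses precisely to (\ref{PDE-1}); this requires careful bookkeeping with the modified Sharpe ratios $\tilde\lambda_{1,1},\tilde\lambda_{1,2}$ and the cross term $-2\rho\sigma_1\sigma_2\theta_1\alpha\beta v_{zz}$, whose $\rho$-dependence is what produces the $(1-\rho^2)$ coefficient in the third term of the PDE and the $\rho\tilde\lambda_{1,1}$ correction in the fourth. A secondary technical point is the localization argument justifying the passage from "local drift $\le 0$" to "local supermartingale": one must exhibit a localizing sequence $\tau_n\uparrow\infty$ along which the stopped process is a genuine supermartingale, which follows from the $\mathcal C^{4,1}$ regularity of $v$ bounding the coefficients on compact sets together with the square-integrability of $\sigma_1\alpha$ from the definition of $\mathcal A_1$; the positivity constraint $\tilde X_1>0$ keeps the process inside the domain $\mathbb D$ on which $v$ is smooth. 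Everything else—existence of $v$, admissibility of the initial datum, well-definedness of $\tilde X_1^{\ast}$, and $\alpha^{\ast}\in\mathcal A_1$—is assumed in the statement, so no further argument is needed there.
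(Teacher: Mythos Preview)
Your proposal is correct and follows essentially the same verification-via-It\^o approach as the paper. The only difference is cosmetic: the paper first substitutes $\hat\alpha=\alpha-\rho\theta_1\tfrac{\sigma_2}{\sigma_1}\beta$ and decomposes $W_2=\rho W_1+\sqrt{1-\rho^2}\,W_1^{\perp}$, which diagonalizes the quadratic in the control and makes the algebraic bookkeeping you flag as ``the main obstacle'' immediate.
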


\begin{proof}
We first rewrite (\ref{relative-dynamics1})\ as 
\begin{equation}
\frac{d\tilde{X}_{1}}{\tilde{X}_{1}}=\sigma _{1}\hat{\alpha}\left( \tilde{%
\lambda}_{1,1}dt+dW_{1}\right) +\theta _{1}\sigma _{2}\beta \left( \left(
\rho \tilde{\lambda}_{1,1}-\tilde{\lambda}_{1,2}\right) dt-\sqrt{1-\rho ^{2}}%
dW_{1}^{\perp }\right) ,  \label{X-alternative}
\end{equation}%
for $W_{1}^{\perp }$ being a standard Brownian motion orthogonal to $W^{1}$
and the modified policy $\left( \hat{\alpha}_{t}\right) _{t\geq 0},$ 
\begin{equation}
\hat{\alpha}:=\alpha -\rho \theta _{1}\frac{\sigma _{2}}{\sigma _{1}}\beta .
\label{pi-alternative}
\end{equation}%
Assuming that $v\left( z,t\right) \in \mathcal{U}$, Ito's formula yields%
\begin{align*}
dv(\tilde{X}_{1},t)&=v_{t}(\tilde{X}_{1},t)dt+\left( \frac{1}{2}\sigma _{1}^{2}\hat{\alpha}^{2}\tilde{X}_{1}^{2}v_{zz}(%
\tilde{X}_{1},t)+\tilde{\lambda}_{1,1}\hat{\alpha}\tilde{X}_{1}v_{z}(\tilde{X%
}_{1},t)\right) dt\\
&+\left( \frac{1}{2}\left( 1-\rho ^{2}\right) \left( \sigma _{2}\theta
_{1}\beta \right) ^{2}\tilde{X}_{1}^{2}v_{zz}(\tilde{X}_{1},t)+\left( \rho 
\tilde{\lambda}_{1,1}-\tilde{\lambda}_{1,2}\right) \sigma _{2}\theta
_{1}\beta \tilde{X}_{1}v_{z}(\tilde{X}_{1},t)\right) dt\\
&+v_{z}(\tilde{X}_{1},t)\left( \sigma _{1}\tilde{\alpha}dW_{1}-\sigma
_{2}\theta _{1}\beta \sqrt{1-\rho ^{2}}dW_{1}^{\perp }\right).
\end{align*}
Note that for $v_{zz}<0,$ we have 
\begin{equation*}
\frac{1}{2}\sigma _{1}^{2}\hat{\alpha}^{2}\tilde{X}_{1}^{2}v_{zz}(\tilde{X}%
_{1},t)+\tilde{\lambda}_{1,1}\hat{\alpha}\tilde{X}_{1}v_{z}(\tilde{X}%
_{1},t)\leqslant -\frac{1}{2}\tilde{\lambda}_{1,1}^{2}\frac{v_{z}^{2}}{v_{zz}%
},
\end{equation*}%
with the maximum $\hat{\alpha}^{\ast }$ occurring at $\hat{\alpha}^{\ast }=-
\frac{\tilde{\lambda}_{1,1}}{\sigma _{1}}\frac{v_{z}(\tilde{X}_{1},t)}{%
\tilde{X}_{1}v_{zz}(\tilde{X}_{1},t)}.$ The rest of the proof follows easily.
\end{proof}

\bigskip

\textit{Discussion: }Equation (\ref{PDE-1}) is, in general, non-tractable
due to the presence of the second-order linear term $\frac{1}{2}\left(
1-\rho ^{2}\right) \theta _{1}^{2}\sigma _{2}^{2}\beta ^{2}z^{2}v_{zz}$ (the
first-order term $\theta _{1}\left( \rho \tilde{\lambda}_{1,1}-\tilde{\lambda%
}_{1,2}\right) \sigma _{2}\beta zv_{z}$ may be easily absorbed with a mere
time-rescaling). Its form is random and evolves with the market and the
competitor's policy forward in time.

Equations of similar structure also arise in expected utility problems in
the classical setting when there is random endowment and/or labor income processes. To the best of our knowledge, they are also non-tractable and only
general abstract results exist to date (see, among others, \cite{Mos17} and the more recent work \cite{MosSir20}). In the
forward case, an additional complication arises from the ill-posedness of
the problem, for one also needs to specify the class of admissible initial
conditions $V_{1}(\tilde{x}_{1},0;\beta ).$ This is a rather challenging
question, currently investigated by the authors. On the other hand, the CRRA
class provides an example, showing that Definition \ref{definition of relative forward} is not
vacuous.

Despite its non-tractability, equation (\ref{PDE-1}) demonstrates that the
best-response criterion $V_{1}(\tilde{x}_{1},t;\beta )$ is \textit{%
endogenously }specified and depends on the current evolution of the market
and the competitor's policy. Both these features are in contrast to their
analogues in the classical cases.

The optimal policy is constructed through the random feedback functional $%
\alpha ^{\ast }(z,t),$ which consists of the ``myopic"-type term $\frac{%
\tilde{\lambda}_{1,1}}{\sigma _{1}}R_{1}(z,t)$ and the linear term $\rho 
\frac{\sigma _{2}}{\sigma _{1}}\theta _{1}\beta .$ The first component
resembles the one in the original forward setting but now with modified risk
premium $\tilde{\lambda}_{1,1}.$ It depends on the competitor's policy $%
\beta $ through $\tilde{\lambda}_{1,1}$ and $R_{1}(z,t).$ If $\rho \neq 0,$
it may become zero if there exist time(s), say $t_{0},$ such that $\beta
_{t_{0}}=\frac{\lambda _{1,t_{0}}}{\rho \sigma _{2,t_{0}}\theta _{1}}.$

In general, it is difficult to provide any qualitative conclusions on how $%
\alpha ^{\ast }(z,t)$ is influenced by $\beta $ but at least (\ref%
{optimal-passive-general})\ highlights its endogeneity and that it is
affected by the realized market performance, the competitor's policy, and
the manager's realized performance. These characteristics are the outcome of
the flexibility of the normative best-response forward criterion. We stress
that empirical evidence strongly supports such features; see, for example, 
\cite{ChePen09, GorPalPra01, KemRue08}, for the effects of past performance
by the manager and \cite{KemRueTan09} for the impact of realized market
performance. The classical model in which the (terminal) risk tolerance is
exogenously chosen does not seem to capture these phenomena, as argued in
these papers.

Next we note that, in general, $V_{1}(\tilde{x}_{1},t;\beta )$ may \textit{%
not }be time-monotone (albeit being locally riskless). This can be seen from
equation (\ref{PDE-1}) when written as (recall $\rho ^{2}\neq 1)$ 
\begin{equation*}
v_{t}+\frac{1}{2}\left( 1-\rho ^{2}\right) \theta _{1}^{2}v_{zz}\left(
\sigma _{2}\beta z-c_{1}\right) \left( \sigma _{2}\beta z-c_{2}\right) =0,
\end{equation*}%
with $c_{1,2}=\frac{v_{z}}{\theta _{1}v_{zz}}\frac{-\left( \rho \tilde{%
\lambda}_{1,1}-\tilde{\lambda}_{1,2}\right) \pm \sqrt{\Delta }}{1-\rho ^{2}}%
, $ and the process $\left( \Delta _{t}\right) _{t\geq 0}$ given by $\Delta
:=\tilde{\lambda}_{1,1}^{2}-2\rho \tilde{\lambda}_{1,1}\tilde{\lambda}_{1,2}+%
\tilde{\lambda}_{1,2}^{2}>0.$ We easily deduce that $c_{1}c_{2}<0$ and the
lack of time-monotonicity follows from the above equation$\ $and the assumed
spatial concavity of $v.$

We recall that in the absence of competition ($\theta _{1}=0)$, the
analogous locally riskless criterion is given by $u(x_{1},\int_{0}^{t}%
\lambda _{1}^{2}ds),$ with $u$ satisfying $u_{t}=\frac{1}{2}\frac{u_{z}^{2}}{%
u_{zz}},$ $\left( z,t\right) \in \mathbb{D}.$ This process is always
decreasing\textit{\ }in time. The lack of time-monotonicity is one of the
fundamental differences between the forward performance processes with and
without competition, $V_{1}(\tilde{x}_{1},t;\beta )$ and $%
u(x_{1},\int_{0}^{t}\lambda _{1}^{2}ds)$. We comment more on this in the
next section.

If $\rho =0$, then $\tilde{\lambda}_{1,1}=\lambda _{1}$ and equation (\ref%
{PDE-1}) reduces to%
\begin{equation*}
v_{t}-\frac{1}{2}\lambda _{1}^{2}\frac{v_{z}^{2}}{v_{zz}}+\frac{1}{2}\theta
_{1}^{2}\sigma _{2}^{2}\beta ^{2}z^{2}v_{zz}-\tilde{\lambda}_{1,2}\theta
_{1}\sigma _{2}\beta zu_{z}=0.
\end{equation*}%
In turn, $\alpha ^{\ast }(z,t)=-\frac{\lambda _{1}}{\sigma _{1}}\frac{%
v_{z}(z,t)}{zv_{zz}(z,t)},$ with $v$ still depending on $\beta $ through the
coefficients in the reduced equation above.

If $\rho \neq 0,$ relative performance concerns might lead to zero
allocation in $\tilde{S}_{1,1},$ at time(s) $t_{0}$ such that $\frac{\tilde{%
\lambda}_{1,1,t_{0}}}{\sigma _{1,t_{0}}}R_{1}(z,t_{0})+\rho \frac{\sigma
_{2,t_{0}}}{\sigma _{1,t_{0}}}\theta _{1}\beta _{t_{0}}=0.$

\subsubsection{The CRRA case}

\label{subsec:CRRA passive}

To provide further insights on the forward relative performance criteria and
also compare them with the ones in the classical setting, we study the case
of homothetic criteria for manager $1.$ We impose no assumption on what
criterion manager $2$ might follow, we only use assume that she follows an
arbitrary policy $\beta \in \mathcal{A}_{2}$.

\begin{proposition}
\label{passive-specialization} Let policy $\beta \in \mathcal{A}_{2},$ $\rho
^{2}\neq 1,$ and $\tilde{\lambda}_{1,1}$ and $\tilde{\lambda}_{1,2}$ as in (%
\ref{lamda-1}). Let $\gamma _{1}>0,$ $\gamma _{1}\neq 0,$ and $\left( \eta
_{1,t}\right) _{t\geq 0}$ be given by 
\begin{equation}
\eta _{1}=\tilde{\lambda}_{1,1}^{2}+2\left( \rho \tilde{\lambda}_{1,1}-%
\tilde{\lambda}_{1,2}\right) \theta _{1}\sigma _{2}\beta \gamma _{1}-\left(
1-\rho ^{2}\right) \theta _{1}^{2}\sigma _{2}^{2}\beta ^{2}\gamma _{1}^{2}.
\label{eta-1}
\end{equation}%
Then, the process 
\begin{equation}
V_{1}\left( \tilde{x}_{1},t;\beta \right) =\frac{\tilde{x}_{1}^{1-\gamma
_{1}}}{1-\gamma _{1}}e^{-\int_{0}^{t}\frac{1-\gamma _{1}}{2\gamma _{1}}\eta
_{1}ds},\text{ \  \  \ }  \label{CRRA-passive}
\end{equation}%
is a locally riskless best-response forward criterion and the investment
policy%
\begin{equation}
\alpha ^{\ast }=\frac{1}{\gamma _{1}}\frac{\tilde{\lambda}_{1,1}}{\sigma _{1}%
}+\rho \theta _{1}\frac{\sigma _{2}}{\sigma _{1}}\beta
\label{portoflio-passive}
\end{equation}%
is optimal.\textbf{\ }
\end{proposition}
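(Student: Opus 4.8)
The plan is to verify Proposition \ref{passive-specialization} by substituting the proposed ansatz directly into the random PDE \eqref{PDE-1} from Proposition \ref{forward SPDE 1}, since that proposition already guarantees that any $v \in \mathcal{U}$ solving \eqref{PDE-1} (with an admissible initial datum) yields a locally riskless best-response forward criterion, with optimal policy given by \eqref{optimal-passive-general}. So the work reduces to two things: checking that $V_1(\tilde x_1,t;\beta)$ in \eqref{CRRA-passive} indeed satisfies \eqref{PDE-1}, and checking the regularity/admissibility hypotheses needed to invoke Proposition \ref{forward SPDE 1}.

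First I would compute the relevant derivatives of the power ansatz. Writing $v(z,t) = \frac{z^{1-\gamma_1}}{1-\gamma_1} f(t)$ with $f(t) = \exp\!\left(-\int_0^t \frac{1-\gamma_1}{2\gamma_1}\eta_1\,ds\right)$, one has $v_z = z^{-\gamma_1} f$, $v_{zz} = -\gamma_1 z^{-\gamma_1-1} f$, so that $R_1(z,t) = -v_z/(z v_{zz}) = 1/\gamma_1$, and $v_z^2/v_{zz} = -\frac{1}{\gamma_1} z^{1-\gamma_1} f = -(1-\gamma_1)\frac{1}{\gamma_1} v$. Also $v_t = -\frac{1-\gamma_1}{2\gamma_1}\eta_1\, v$, $z^2 v_{zz} = -\gamma_1 (1-\gamma_1) v$, and $z v_z = (1-\gamma_1) v$. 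Substituting these into \eqref{PDE-1} and dividing through by $v$ (nonzero for $z>0$), the equation \eqref{PDE-1} collapses to the scalar identity
\begin{equation*}
-\frac{1-\gamma_1}{2\gamma_1}\eta_1 + \frac{1}{2}\tilde\lambda_{1,1}^2\frac{1-\gamma_1}{\gamma_1} - \frac{1}{2}(1-\rho^2)\theta_1^2\sigma_2^2\beta^2 \gamma_1(1-\gamma_1) + (\rho\tilde\lambda_{1,1}-\tilde\lambda_{1,2})\theta_1\sigma_2\beta(1-\gamma_1) = 0,
\end{equation*}
which, after cancelling the common factor $\frac{1-\gamma_1}{2\gamma_1}$ (legitimate since $\gamma_1>0$; note the stated hypothesis $\gamma_1\neq 0$ is what is needed, and the $\gamma_1\neq 1$ case would just make the whole criterion linear in $\log$, handled separately or excluded), is exactly the definition \eqref{eta-1} of $\eta_1$. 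This confirms the PDE holds identically. Then the optimal feedback \eqref{optimal-passive-general} becomes $\alpha^\ast(z,t) = \frac{\tilde\lambda_{1,1}}{\sigma_1}\cdot\frac{1}{\gamma_1} + \rho\frac{\sigma_2}{\sigma_1}\theta_1\beta$, i.e.\ the constant-in-$z$ policy \eqref{portoflio-passive}, which is what is claimed.

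It remains to check the hypotheses of Proposition \ref{forward SPDE 1}: that $v \in \mathcal{U}$, i.e.\ $z\mapsto v(z,t)$ is strictly increasing, strictly concave and $\mathcal{C}^{4,1}$ for each $t$ (clear from the power form with $\gamma_1>0$ and from $f(t)>0$, which in turn needs $\eta_1$ to be integrable in $t$ — this follows from boundedness of $\lambda_1,\lambda_2$ and the $\mathcal{A}_2$-integrability of $\beta$, though strictly one wants a locally bounded or locally integrable $\eta_1$), and that $\tilde X_1^\ast$ is well-defined with $\alpha^\ast \in \mathcal{A}_1$. The latter is the one genuinely nontrivial point: one substitutes \eqref{portoflio-passive} into \eqref{relative-dynamics1} (or its equivalent form \eqref{X-alternative}, where $\hat\alpha^\ast = \tilde\lambda_{1,1}/(\sigma_1\gamma_1)$ is bounded), obtains a linear SDE for $\tilde X_1^\ast$ with coefficients that are bounded thanks to the bounds on $\lambda_i$ and the admissibility of $\beta$, hence it has a unique strictly positive solution, and then one verifies $\mathbb{E}\big[\int_0^t \sigma_{1,s}^2(\alpha^\ast_s)^2\,ds\big]<\infty$, again using $0<c\le\lambda_1\le C$ and $\mathbb{E}[\int_0^t\sigma_{2,s}^2\beta_s^2\,ds]<\infty$. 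The main obstacle — really the only place where care is needed rather than routine substitution — is this admissibility verification together with ensuring $\eta_1$ is integrable enough that $f(t)$ is finite and strictly positive for all $t\ge 0$; everything else is a direct, if slightly tedious, plug-in.
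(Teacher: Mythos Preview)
Your proof is correct and follows essentially the same approach as the paper: plug the separable power ansatz into the random PDE \eqref{PDE-1} from Proposition \ref{forward SPDE 1}, verify that it reduces to the defining identity \eqref{eta-1} for $\eta_1$, and invoke the boundedness of the Sharpe ratios together with the admissibility of $\beta$ to conclude. Your write-up is in fact more detailed than the paper's own proof, which is a one-paragraph sketch.
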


\begin{proof}
We look for candidate criteria of the separable form $V_{1}(\tilde{x}%
_{1},t;\beta )=\frac{\tilde{x}^{1-\gamma _{1}}}{1-\gamma _{1}}K,$ where $%
\left( K_{t}\right) _{t\geq 0}$ is an $\mathcal{F}_{t}$-adapted process,
differentiable in $t$ with $K_{0}=1$. Using equation (\ref{PDE-1}), the
boundedness assumption on the Sharpe ratios and the admissibility of $\beta
, $ we easily conclude.
\end{proof}

We may rewrite the process $\left( \eta _{1,t}\right) _{t\geq 0}$ as 
\begin{equation}
\eta _{1}=\left( \lambda _{1,1}-\delta _{1}\theta _{1}\sigma _{2}\beta
\right) ^{2}+\left( \rho ^{2}(1-\gamma _{1})^{2}+\gamma _{1}(1-\gamma _{1}+%
\frac{1}{\theta _{1}})-\delta _{1}^{2}\right) \theta _{1}^{2}\sigma
_{2}^{2}\beta ^{2},  \label{Basak-1}
\end{equation}%
with $\left( \delta _{1,t}\right) _{t\geq 0}$ given by 
\begin{equation}
\delta _{1}=\gamma _{1}\frac{\lambda _{2}}{\lambda _{1}}+\rho (1-\gamma
_{1}).  \label{Basak-2}
\end{equation}%
Similar expressions were derived in \cite{Basak15} for the special case of
log-normal markets for power utilities in the classical setting. Herein, we
have analogous results for general $\mathcal{F}_{t}$-adapted processes $\eta
_{1}$ and $\delta _{1}.$ We stress that \textit{no} solutions of form (\ref%
{CRRA-passive}), (\ref{Basak-1}) and (\ref{Basak-2})\ may be derived in the
classical setting beyond the log-normal case.

The criterion $V_{1}(\tilde{x}_{1},t;\beta )$ resembles its forward
counterpart in the absence of relative performance ($\theta _{1}=0)$, given
by $u\left( x_{1},t\right) =\frac{x_{1}^{1-\gamma _{1}}}{1-\gamma _{1}}%
e^{-\int_{0}^{t}\frac{1-\gamma _{1}}{2\gamma _{1}}\lambda _{1}^{2}ds}$ (see \cite{MZ10a}), which is however always time-monotone.

Rewriting (\ref{portoflio-passive}) as 
\begin{equation}
\alpha ^{\ast }=\frac{1}{\gamma _{1}}\frac{\lambda _{1}}{\sigma _{1}}+\rho
\theta _{1}\left( 1-\frac{1}{\gamma _{1}}\right) \frac{\sigma _{2}}{\sigma
_{1}}\beta ,  \label{a-passive-alternative}
\end{equation}%
we see that depending on the sign of the various terms, manager $1$ might
invest more or less in the risky asset under relative performance concerns.
For example, for $\rho >0$, $\frac{\sigma _{2}}{\sigma _{1}}>0$, and a long
competitor's strategy, $\beta >0,$ we have $\rho \theta _{1}\left( 1-\frac{1%
}{\gamma _{1}}\right) \frac{\sigma _{2}}{\sigma _{1}}\beta >0$ if $\gamma
_{1}<0$, while $\rho \theta _{1}\left( 1-\frac{1}{\gamma _{1}}\right) \frac{%
\sigma _{2}}{\sigma _{1}}\beta <0$ if $0<\gamma _{1}<1.$ These results are
also consistent with the ones in \cite{Basak15} but, now, for a much more
flexible framework. Finally, if the market price of risk $\lambda _{1}$
increases, the position on the familiar asset always increases even with
relative performance concerns. This is consistent with the fact that when
the performance of the asset the manager invests in improves, she tends to
increase her position to it. The process $\lambda _{1}$ usually refers to
the manager's active-management \textit{ability} (see among others \cite%
{Sha66}).

Symmetric results are deduced for manager $2$ if her competitor follows
policy $\alpha \in \mathcal{A}_{1}.$ Namely, for $\gamma _{2}>0,\gamma
_{2}\neq 1,$ and $\left( \eta _{2,t}\right) _{t\geq 0}$ with

\begin{equation}
\eta _{2}:=\tilde{\lambda}_{2,2}^{2}+2\left( -\tilde{\lambda}_{2,1}+\rho 
\tilde{\lambda}_{1,1}\right) \sigma _{1}\theta _{2}\alpha \gamma _{2}-\left(
1-\rho ^{2}\right) \left( \sigma _{1}\theta _{2}\alpha \right) ^{2}\gamma
_{2}^{2},  \label{eta-2}
\end{equation}%
the process $\left( V_{2}\left( \tilde{x}_{2},t;\alpha \right) \right)
_{t\geq 0}$ given by 
\begin{equation}
V_{2}\left( \tilde{x}_{2},t;\alpha \right) =\frac{\tilde{x}_{2}^{1-\gamma
_{2}}}{1-\gamma _{2}}e^{-\int_{0}^{t}\frac{1-\gamma _{2}}{2\gamma _{2}}\eta
_{2}ds},\text{ \  \  \ }  \label{V-2}
\end{equation}%
is a locally riskless best-response forward criterion and the investment
policy%
\begin{equation}
\beta ^{\ast }=\frac{1}{\gamma _{2}}\frac{\tilde{\lambda}_{2,2}}{\sigma _{2}}%
+\rho \theta _{2}\frac{\sigma _{1}}{\sigma _{2}}\alpha =\frac{1}{\gamma _{2}}%
\frac{\lambda _{2}}{\sigma _{2}}+\rho \theta _{2}\left( 1-\frac{1}{\gamma
_{2}}\right) \frac{\sigma _{2}}{\sigma _{1}}\alpha  \label{beta-2}
\end{equation}%
is optimal.

Finally, we may construct a best-response (locally riskless)\ forward
criterion for the limiting cases $\gamma _{1}=0$ and/or $\gamma _{2}=0.$
Looking for a candidate process of the additive form $V_{1}(\tilde{x}%
_{1},t;\beta )=\log \tilde{x}_{1}+K,$ for a suitable process $\left(
K_{t}\right) _{t\geq 0},$ equation (\ref{PDE-1}) yields 
\begin{equation*}
V_{1}(\tilde{x}_{1},t;\beta )=\log \tilde{x}_{1}+\int_{0}^{t}\left( \frac{1}{%
2}\tilde{\lambda}_{1,1}^{2}-\left( \rho \tilde{\lambda}_{1,1}-\tilde{\lambda}%
_{1,2}\right) \theta _{1}\sigma _{2}\beta +\frac{1}{2}\left( 1-\rho
^{2}\right) \theta _{1}^{2}\left( \sigma _{2}\beta \right) ^{2}\right) ds,
\end{equation*}%
with optimal policy $ \alpha^{\ast }=\frac{%
\tilde{\lambda}_{1,1}}{\sigma _{1}}+\rho \frac{\sigma _{2}}{\sigma _{1}}%
\theta _{1}\beta .$ Similar results can be produced for the case $\gamma
_{2}=0.$

\subsection{Forward Nash equilibrium}

The asset managers not only trade between the riskless account and the
respective specialized risky asset but, also, interact dynamically with each
other. Then, the individual best-response problems lead conceptually to a
pure-strategy Nash game. We call the equilibrium of this game a \textit{%
forward Nash equilibrium} and propose the following definition for its
analysis.

We recall the modified risk premia $\tilde{\lambda}_{1,1}\left( \beta
\right) ,\tilde{\lambda}_{1,2}\left( \beta \right) $ and $\tilde{\lambda}%
_{2,1}\left( \alpha \right) ,\tilde{\lambda}_{2,2}\left( \alpha \right) $
(cf.~(\ref{lamda-1}) and (\ref{lambda-2})), highlighting their dependence on
the competitor's policies.

\begin{definition}
\label{definition of forward Nash} A forward Nash equilibrium consists of
two pairs of $\mathcal{F}_{t}$-adapted processes,\\
 $\left( V_{1}\left( \tilde{%
x}_{1},t;\beta ^{\ast }\right) _{t\geq 0},\left( \alpha _{t}^{\ast }\right)
_{t\geq 0}\right) $ and $(\left( V_{2}\left( \tilde{x}_{2},t;\alpha ^{\ast
}\right) \right) _{t\geq 0},\left( \beta _{t}^{\ast }\right) _{t\geq 0}),$ $%
\tilde{x}_{1},\tilde{x}_{2}>0,$ $t\geq 0,$ with the following properties:

i) The processes $\alpha ^{\ast }\in \mathcal{A}_{1}$ and $\beta ^{\ast }\in 
\mathcal{A}_{2},$

ii) The processes $V_{1}\left( \tilde{x}_{1},t;\beta ^{\ast }\right) $, $%
V_{2}\left( \tilde{x}_{2},t;\alpha ^{\ast }\right) \in \mathcal{U}.$

ii) For $\alpha \in \mathcal{A}_{1},$ $V_{1}(\tilde{X}_{1},t;\beta ^{\ast })$
is a (local) super-martingale and $V_{1}(\tilde{X}_{1,t}^{\ast },t;\beta
^{\ast })$ is a (local) martingale where $\tilde{X}_{1}$ and $\tilde{X}%
_{1}^{\ast }$ solve (\ref{relative-dynamics1}) with $\tilde{\lambda}_{1,1}=%
\tilde{\lambda}_{1,1}\left( \beta ^{\ast }\right) $ and $\tilde{\lambda}%
_{1,2}=\tilde{\lambda}_{1,2}\left( \beta ^{\ast }\right) ,$ and with $\alpha 
$ and $\alpha ^{\ast }$ being, respectively, used.

iii) For $\beta \in \mathcal{A}_{2},$ $V_{2}(\tilde{X}_{2},t;\alpha ^{\ast
}) $ is a (local) super-martingale and $V_{2}(\tilde{X}_{2,t}^{\ast
},t;\alpha ^{\ast })$ is a (local) martingale where $\tilde{X}_{2}$ and $%
\tilde{X}_{2}^{\ast }$ solve (\ref{relative-dynamics2}) with $\tilde{\lambda}%
_{2,1}=\tilde{\lambda}_{2,1}\left( \alpha ^{\ast }\right) $ and $\tilde{%
\lambda}_{2,2}=\tilde{\lambda}_{2,2}\left( \alpha ^{\ast }\right) ,$ and
with $\beta $ and $\beta ^{\ast }$ being, respectively, used.
\end{definition}

If, under appropriate integrability conditions, the processes $\left( V_{1}(%
\tilde{X}_{1},t;\beta ^{\ast })\right) _{t\geq 0}$ and $\left( V_{1}(\tilde{X%
}_{1}^{\ast },t;\beta ^{\ast })\right) _{t\geq 0}$ are, respectively, a true
supermartingale and a true martingale then, for any $\alpha \in \mathcal{A}%
_{1}$, 
\begin{equation*}
\mathbb{E}\left[ V_{1}(\tilde{X}_{1}^{\ast },t;\beta ^{\ast })\right] =%
\mathbb{E}\left[ V_{1}(\tilde{x}_{1},0)\right] \geq \mathbb{E}\left[ V_{1}(%
\tilde{X}_{1}^{1},t;\beta ^{\ast })\right] .
\end{equation*}%
Analogously,

\begin{equation*}
\mathbb{E}\left[ V_{2}(\tilde{X}_{2}^{\ast },t;\alpha ^{\ast })\right] =%
\mathbb{E}\left[ V_{2}(\tilde{x}_{2},0)\right] \geq \mathbb{E}\left[ V_{2}(%
\tilde{X}_{2},t;\alpha ^{\ast })\right] .
\end{equation*}%
In other words, no unilateral deviation in strategy by either manager will
result in an increase in the expected utility of her relative performance
metric.

From Proposition \ref{forward SPDE 1} and, in particular, the best-response
strategy (\ref{optimal-passive-general}) and analogous results for the
optimal policy $\beta ^{\ast }$, it follows that the candidate forward Nash
equilibrium strategies should satisfy the system of equations 
\begin{equation}
\begin{cases}
\alpha ^{\ast }=\frac{\tilde{\lambda}_{1,1}\left( \beta ^{\ast }\right) }{%
\sigma _{1}}R_{1}^{\ast }\left( \tilde{X}_{1}^{\ast },t;\beta ^{\ast
}\right) +\rho \theta _{1}\frac{\sigma _{2}}{\sigma _{1}}\beta ^{\ast } \\ 
\\ 
\beta ^{\ast }=\frac{\tilde{\lambda}_{2,2}\left( \alpha ^{\ast }\right) }{%
\sigma _{2}}R_{2}^{\ast }\left( \tilde{X}_{2}^{\ast },t;\alpha ^{\ast
}\right) +\rho \theta _{2}\frac{\sigma _{1}}{\sigma _{2}}\alpha ^{\ast },%
\end{cases}
\label{system}
\end{equation}%
where $\left( R_{1,t}^{\ast }\left( \tilde{X}_{1}^{\ast },t;\beta ^{\ast
}\right) \right) _{t\geq 0}$ \ and $\left( R_{2,t}^{\ast }\left( \tilde{X}%
_{2}^{\ast },t;\alpha ^{\ast }\right) \right) _{t\geq 0}$ are defined as 
\begin{equation*}
R_{1}^{\ast }\left( \tilde{X}_{1}^{\ast },t;\beta ^{\ast }\right) =-\frac{%
v_{1,z}(\tilde{X}_{1}^{\ast },t)}{\tilde{X}_{1}^{\ast }v_{1,zz}(\tilde{X}%
_{1}^{\ast },t)}\text{ \  \  \ and \  \  \ }R_{2}^{\ast }\left( \tilde{X}%
_{2}^{\ast },t;\alpha ^{\ast }\right) =-\frac{v_{2,z}(\tilde{X}_{2}^{\ast
},t)}{\tilde{X}_{2}^{\ast }v_{2,zz}(\tilde{X}_{2}^{\ast },t)},
\end{equation*}%
with $v_{1}\left( z,t\right) $ and $v_{2}\left( z,t\right) ,$ $\left(
z,t\right) \in \mathbb{D},$ solving%
\begin{equation}
v_{1,t}-\frac{1}{2}\tilde{\lambda}_{1,1}^{2}\left( \beta ^{\ast }\right) 
\frac{v_{1,z}^{2}}{v_{1,zz}}+\frac{1}{2}\left( 1-\rho ^{2}\right) \sigma
_{2}^{2}\theta _{1}^{2}\beta ^{\ast 2}z^{2}v_{1,zz}+\left( \rho \tilde{%
\lambda}_{1,1}\left( \beta ^{\ast }\right) -\tilde{\lambda}_{1,2}\left(
\beta ^{\ast }\right) \right) \sigma _{2}\theta _{1}\beta ^{\ast }zv_{1,z}=0
\label{u-1}
\end{equation}%
and 
\begin{equation}
v_{2,t}-\frac{1}{2}\tilde{\lambda}_{2,2}^{2}\left( \alpha ^{\ast }\right) 
\frac{v_{2,z}^{2}}{v_{2,zz}}+\frac{1}{2}\left( 1-\rho ^{2}\right) \sigma
_{1}^{2}\theta _{2}^{2}\alpha ^{\ast 2}z^{2}v_{2,zz}+\left( -\tilde{\lambda}%
_{2,1}\left( \alpha ^{\ast }\right) +\rho \tilde{\lambda}_{2,2}\left( \alpha
^{\ast }\right) \right) \sigma _{1}\theta _{2}\alpha ^{\ast }zv_{2,z}=0.
\label{u-2}
\end{equation}%
System (\ref{system}) is in general non-tractable because of the highly
non-linear terms $R_{1}^{\ast }\left( \tilde{X}_{1}^{\ast },t;\beta ^{\ast
}\right) $ and $R_{2}^{\ast }\left( \tilde{X}_{2}^{\ast },t;\alpha ^{\ast
}\right) .$

\subsubsection{The CRRA cases}

For tractability and to highlight the differences between the forward
approach and the classical setting, we examine the case of homothetic
criteria for \textit{both }managers.

\begin{proposition}
Let $\gamma _{1},\gamma _{2}>0$ with $\gamma _{1},\gamma _{2}\neq 1,$ and
assume that 
\begin{equation}
\delta :=\gamma _{1}\gamma _{2}-\rho ^{2}\theta _{1}\theta _{2}(1-\gamma
_{1})(1-\gamma _{2})\neq 0.  \label{delta-constant}
\end{equation}%
Consider the processes $\left( \alpha _{t}^{\ast }\right) _{t\geq 0},$ $%
\left( \beta _{t}^{\ast }\right) _{t\geq 0}$ given by 
\begin{equation}
\alpha ^{\ast }=\frac{\gamma _{2}\lambda _{1}-\rho \theta _{1}(1-\gamma
_{1})\lambda _{2}}{\sigma _{1}\delta }\text{ \  \  \ and \  \  \ }\beta ^{\ast }=%
\frac{\gamma _{1}\lambda _{2}-\rho \theta _{2}(1-\gamma _{2})\lambda _{1}}{%
\sigma _{2}\delta }.  \label{Nash-optimal}
\end{equation}%
Let also $\left( \eta _{1,t}^{\ast }\right) _{t\geq 0}$ and $\left( \eta
_{2,t}^{\ast }\right) _{t\geq 0}$ be given by (\ref{eta-1}) and (\ref{eta-2}%
) when $\beta ^{\ast }$ and $\alpha ^{\ast }$are, respectively, used and $%
\left( V_{1}\left( \tilde{x}_{1},t;\beta ^{\ast }\right) \right) _{t\geq 0}$
and $\left( V_{2}\left( \tilde{x}_{2},t;\alpha ^{\ast }\right) \right)
_{t\geq 0}$ defined as%
\begin{equation*}
V_{1}\left( \tilde{x}_{1},t;\beta ^{\ast }\right) =\frac{\tilde{x}%
_{1}^{1-\gamma _{1}}}{1-\gamma _{1}}e^{-\int_{0}^{t}\frac{1-\gamma _{1}}{%
2\gamma _{1}}\eta _{1}^{\ast }ds},\text{ }V_{2}\left( \tilde{x}_{2},t;\alpha
^{\ast }\right) =\frac{\tilde{x}_{2}^{1-\gamma _{2}}}{1-\gamma _{2}}%
e^{-\int_{0}^{t}\frac{1-\gamma _{2}}{2\gamma _{2}}\eta _{2}^{\ast }ds}.
\end{equation*}%
Then, the pair of processes $(\left( V_{1}\left( \tilde{x}_{1},t;\beta
^{\ast }\right) ,\alpha ^{\ast }\right) $ and $\left( V_{2}\left( \tilde{x}%
_{2},t;\alpha ^{\ast }\right) ,\beta ^{\ast }\right) $ constitutes a forward
Nash equilibrium.
\end{proposition}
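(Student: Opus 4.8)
The plan is to verify that the proposed pair satisfies Definition \ref{definition of forward Nash} by reducing the Nash problem to two coupled best-response problems, each of which is already solved by Proposition \ref{passive-specialization}. First I would observe that the announced candidate strategies $(\alpha^{\ast},\beta^{\ast})$ in (\ref{Nash-optimal}) are precisely the solution of the fixed-point system obtained by substituting the homothetic best-response policy (\ref{portoflio-passive}) of manager $1$ (with competitor's policy $\beta^{\ast}$) and the symmetric policy (\ref{beta-2}) of manager $2$ (with competitor's policy $\alpha^{\ast}$) into each other. Concretely, writing $\alpha^{\ast}=\tfrac{1}{\gamma_1}\tfrac{\lambda_1}{\sigma_1}+\rho\theta_1(1-\tfrac{1}{\gamma_1})\tfrac{\sigma_2}{\sigma_1}\beta^{\ast}$ and $\beta^{\ast}=\tfrac{1}{\gamma_2}\tfrac{\lambda_2}{\sigma_2}+\rho\theta_2(1-\tfrac{1}{\gamma_2})\tfrac{\sigma_1}{\sigma_2}\alpha^{\ast}$ gives a $2\times2$ linear system in $(\sigma_1\alpha^{\ast},\sigma_2\beta^{\ast})$ whose determinant, after clearing the $\tfrac{1}{\gamma_1\gamma_2}$ factor, equals $\delta/(\gamma_1\gamma_2)$ with $\delta$ as in (\ref{delta-constant}); the nondegeneracy hypothesis $\delta\neq0$ is exactly what makes this system uniquely solvable, and Cramer's rule yields (\ref{Nash-optimal}). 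This is the one genuinely new computation, and it is elementary.

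Next I would check the supermartingale/martingale requirements. Here the key point is that once $\beta^{\ast}$ is \emph{fixed} as an admissible process in $\mathcal{A}_2$, the modified risk premia $\tilde{\lambda}_{1,1}(\beta^{\ast}),\tilde{\lambda}_{1,2}(\beta^{\ast})$ become bona fide $\mathcal{F}_t$-adapted (bounded, by the Sharpe-ratio bounds and admissibility of $\beta^{\ast}$) processes, so manager $1$'s problem is \emph{literally} the best-response problem of Proposition \ref{passive-specialization} with that particular competitor policy. Hence $V_1(\tilde{x}_1,t;\beta^{\ast})=\tfrac{\tilde{x}_1^{1-\gamma_1}}{1-\gamma_1}e^{-\int_0^t\frac{1-\gamma_1}{2\gamma_1}\eta_1^{\ast}\,ds}$ with $\eta_1^{\ast}$ given by (\ref{eta-1}) evaluated at $\beta^{\ast}$ is a locally riskless best-response forward criterion, $V_1\in\mathcal{U}$, $V_1(\tilde{X}_1,t;\beta^{\ast})$ is a local supermartingale for every $\alpha\in\mathcal{A}_1$, and it is a local martingale along the optimal policy, which by (\ref{portoflio-passive}) is $\tfrac{1}{\gamma_1}\tfrac{\tilde{\lambda}_{1,1}(\beta^{\ast})}{\sigma_1}+\rho\theta_1\tfrac{\sigma_2}{\sigma_1}\beta^{\ast}$. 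The content of the fixed-point computation above is exactly that this optimizer coincides with the announced $\alpha^{\ast}$ of (\ref{Nash-optimal}). The symmetric statement gives manager $2$'s conditions verbatim from (\ref{V-2})--(\ref{beta-2}) with $\alpha^{\ast}$ substituted.

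Finally I would confirm admissibility, i.e. condition (i) of Definition \ref{definition of forward Nash}: since $\lambda_1,\lambda_2$ are bounded away from $0$ and $\infty$, the constants $\gamma_1,\gamma_2,\theta_1,\theta_2$ fixed and $\delta\neq0$, the processes $\alpha^{\ast},\beta^{\ast}$ in (\ref{Nash-optimal}) are bounded $\mathcal{F}_t$-adapted, hence satisfy the integrability requirement in (\ref{A}); the wealth-positivity constraint holds because $\tilde{X}_1^{\ast},\tilde{X}_2^{\ast}$ are stochastic exponentials of processes with bounded coefficients and thus strictly positive. I expect the only mild obstacle to be bookkeeping: one must keep straight that $\eta_1^{\ast}$ is (\ref{eta-1}) with the \emph{equilibrium} $\beta^{\ast}$ inserted (so it now depends on both $\gamma_1$ and, through $\beta^{\ast}$, on $\gamma_2,\theta_2$), and verify the consistency between the two equivalent forms (\ref{eta-1}) and (\ref{Basak-1})--(\ref{Basak-2}) of $\eta_1$ so that the exponent in $V_1$ is the one claimed; this is a direct substitution with no conceptual difficulty.
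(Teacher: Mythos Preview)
Your proposal is correct and follows essentially the same approach as the paper: set up the $2\times2$ linear fixed-point system from the CRRA best-response formulas (\ref{a-passive-alternative}) and (\ref{beta-2}), observe that its determinant is $\delta/(\gamma_1\gamma_2)$ so that $\delta\neq0$ yields the unique solution (\ref{Nash-optimal}), and then invoke Proposition~\ref{passive-specialization} (and its symmetric counterpart) together with the boundedness of $\lambda_1,\lambda_2$ for admissibility and the required (super)martingale properties. If anything, your write-up is more explicit than the paper's, which dispatches the verification step with ``the rest of the proof follows easily.''
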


\begin{proof}
From (\ref{a-passive-alternative}) and (\ref{beta-2}), we deduce that the
candidate strategies $\left( \alpha _{t}^{\ast }\right) _{t\geq 0},\left(
\beta _{t}^{\ast }\right) _{t\geq 0}${\ }must solve the system 
\begin{equation*}
\begin{cases}
\alpha {^{\ast }-\left( 1-\frac{1}{\gamma _{1}}\right) \rho \theta _{1}\frac{%
\sigma _{2}}{\sigma _{1}}\beta ^{\ast }=\frac{1}{\gamma _{1}}\frac{\lambda
_{1}}{\sigma _{1}}} \\ 
\\ 
{-\left( 1-\frac{1}{\gamma _{2}}\right) \rho \theta _{2}\frac{\sigma _{1}}{%
\sigma _{2}}\alpha ^{\ast }+\beta ^{\ast }=\frac{1}{\gamma _{2}}\frac{%
\lambda _{2}}{\sigma _{2}}.}%
\end{cases}%
\end{equation*}%
Using that its determinant is given by $\frac{\delta }{\gamma _{1}\gamma _{2}%
},$ with $\delta $ as in (\ref{delta-constant}) and, by assumption, $\delta
\neq 0$, we easily deduce (\ref{Nash-optimal}). Furthermore, $\alpha ^{\ast
}\in \mathcal{A}_{1}$ and $\beta ^{\ast }\in \mathcal{A}_{2},$ given the
assumption on bounded $\lambda _{1}$ and $\lambda _{2}.$ The rest of the
proof follows easily.
\end{proof}

In the special case $\rho =0,$ the forward Nash equilibrium strategies
simplify to, 
\begin{equation*}
\alpha ^{\ast }=\frac{1}{\gamma _{1}}\frac{\lambda _{1}}{\sigma _{1}}\text{
\  \  \ and \  \ }\beta _{2}^{\ast }=\frac{1}{\gamma _{2}}\frac{\lambda _{2}}{%
\sigma _{2}},
\end{equation*}%
which are the optimal policies without competition. Note, however, that the
associated forward Nash criteria still depend on the other manager's
strategy through the processes $\eta _{1}^{\ast }$ and $\eta _{2}^{\ast }$.

Continuing the discussion in 2.1.1., we mention that the forward Nash
equilibrium investment strategies have the same form as those of the
classical setting in a log-normal market (see \cite[Proposition 1]{Basak15}). Hence, we may generalize \textit{all} comparative statics of \cite{Basak15} in the general It\^{o}-diffusion market setting herein.

\section{Asset diversification and forward competition}

\label{sec:diversification}

In this section we impose the situation where both managers invest in the same market $\mathcal{M}=\left(
B,S_{1},S_{2}\right) ,$ with $S_{1},S_{2}$ solving (\ref{stocks}) and
without any trading constraints. This case is particularly popular when
managers aim to beat the same benchmark. The managers
have relative performance concerns and may interact passively or
competitively. As in the asset specialization case, we incorporate these
concerns by working with relative wealth processes with competition
parameters $\theta _{1},\theta _{2}.$ We measure the performance of their strategies using forward best response and forward Nash equilibrium
criteria, respectively. We define them as in Definitions \ref{definition of relative forward} and \ref{definition of forward Nash}, and we also work with locally riskless processes.

Using (\ref{stocks}), the (discounted by the bond) wealth processes $\left(
X_{1,t}\right) _{t\geq 0}$ and $\left( X_{2,t}\right) _{t\geq 0}$, $t\geq 0,$
satisfy 
\begin{equation}
\frac{dX_{1}}{X_{1}}=\sigma _{1}\alpha _{1}(\lambda _{1}dt+dW_{1})+\sigma
_{2}\alpha _{2}(\lambda _{2}dt+dW_{2})\text{ }  \label{wealth-1-div}
\end{equation}%
and 
\begin{equation}
\frac{dX_{2}}{X_{2}}=\sigma _{1}\beta _{1}(\lambda _{1}dt+dW_{1})+\sigma
_{2}\beta _{2}(\lambda _{2}dt+dW_{2})\text{,}  \label{wealth-2-div}
\end{equation}%
with $X_{1,0}=x_{1}>0$ and $X_{2,0}=x_{2}>0,$ and $\alpha _{1},\alpha _{2}$
(resp.~$\beta _{1},\beta _{2}$) being the fractions of wealth $X_{1}$ (resp.~$X_{2}$) invested in asset classes $S_{1}$ and $S_{2},$ respectively. The
set $\mathcal{A}$ of admissible policies $\alpha =\left( \alpha _{1},\alpha
_{2}\right) $ and $\beta =$ $\left( \beta _{1},\beta _{2}\right) $ is
defined similarly to (\ref{A}).

For $\theta _{1},\theta _{2}\in \left( 0,1\right] ,$ $\alpha =\left( \alpha
_{1},\alpha _{2}\right) $ and $\beta =\left( \beta _{1},\beta _{2}\right) ,$
the biased benchmark processes $\left( X_{1,t}^{\theta _{2}}\right) _{t\geq
0}$ and $\left( X_{2,t}^{\theta _{1}}\right) _{t\geq 0}$ solve 
\begin{equation*}
\frac{dX_{1,t}^{\theta _{2}}}{X_{1,t}^{\theta _{2}}}=\theta _{2}\sigma
_{1}\alpha _{1}(\lambda _{1}dt+dW_{1})+\text{\ }\theta _{2}\sigma _{2}\alpha
_{2}(\lambda _{2}dt+dW_{2})+\frac{1}{2}\theta _{2}(\theta _{2}-1)C_{1}\left(
\alpha \right) dt
\end{equation*}%
with the process $\left( C_{1,t}\left( \alpha \right) \right) _{t\geq 0}$,%
\begin{equation}
C_{1}\left( \alpha \right) :=\sigma _{1}^{2}\alpha _{1}^{2}+2\rho \sigma
_{1}\sigma _{2}\alpha _{1}\alpha _{2}+\sigma _{2}^{2}\alpha _{2}^{2}.
\label{C-1}
\end{equation}%
Similarly, 
\begin{equation*}
\frac{dX_{2,t}^{\theta _{1}}}{X_{2,t}^{\theta _{1}}}=\theta _{1}\sigma
_{1}\beta _{1}(\lambda _{1}dt+dW_{1})+\text{\ }\theta _{1}\sigma _{2}\beta
_{2}(\lambda _{2}dt+dW_{2})+\frac{1}{2}\theta _{1}(\theta _{1}-1)C_{2}\left(
\beta \right) dt,
\end{equation*}%
with the process $\left( C_{2,t}\left( \beta \right) \right) _{t\geq 0},$ 
\begin{equation}
C_{2}\left( \beta \right) =\sigma _{1}^{2}\beta _{1}^{2}+2\rho \sigma
_{1}\sigma _{2}\beta _{1}\beta _{2}+\sigma _{2}^{2}\beta _{2}^{2}.
\label{C-2}
\end{equation}

Direct calculations yield that the relative wealths processes $\tilde{X}%
_{1}:=\frac{X_{1}}{X_{2,t}^{\theta _{1}}}$ and $\tilde{X}_{2}:=$ $\frac{X_{2}%
}{X_{1,t}^{\theta _{2}}}$ satisfy 
\begin{align}
\nonumber \frac{d\tilde{X}_{1}}{\tilde{X}_{1}}&=\sigma _{1}\alpha _{1}\left( \tilde{%
\lambda}_{1,1}dt+dW_{1}\right) +\sigma _{2}\alpha _{2}\left( \tilde{\lambda}
_{1,2}dt+dW_{2}\right)\\
&-\sigma _{1}\theta _{1}\beta _{1}\left( \lambda _{1}dt+dW_{1}\right) -\sigma
_{2}\theta _{1}\beta _{2}\left( \lambda _{2}dt+dW_{2}\right) +\frac{1}{2}%
\theta _{1}\left( 1+\theta _{1}\right) C_{2}\left( \beta \right) dt,
\label{X-1-div}
\end{align}
with the processes $\left( \tilde{\lambda}_{1,1,t}\right) _{t\geq 0}$ and $%
\left( \tilde{\lambda}_{1,2,t}\right) _{t\geq 0},$ 
\begin{equation}
\tilde{\lambda}_{1,1}:=\lambda _{1}-\theta _{1}\left( \sigma _{1}\beta
_{1}+\rho \sigma _{2}\beta _{2}\right) \text{ \  \  \ and \  \  \ }\tilde{\lambda%
}_{1,2}:=\lambda _{2}-\theta _{1}\left( \rho \sigma _{1}\beta _{1}+\sigma
_{2}\beta _{2}\right) .  \label{lamda-1-^}
\end{equation}%
Similarly,
\begin{align}
\nonumber\frac{d\tilde{X}_{2}}{\tilde{X}_{2}}&=\sigma _{1}\beta _{1}\left( \tilde{%
\lambda}_{2,1}dt+dW_{1}\right) +\sigma _{2}\beta _{2}\left( \tilde{\lambda}%
_{2,2}dt+dW_{2}\right)\\
&-\sigma _{1}\theta _{2}\alpha _{1}\left( \lambda _{1}dt+dW_{1}\right)
-\sigma _{2}\theta _{2}\alpha _{2}\left( \lambda _{2}dt+dW_{2}\right) +\frac{%
1}{2}\theta _{2}\left( 1+\theta _{2}\right) C_{1}\left( \alpha \right) dt,
\label{X-2-div}
\end{align}
with the processes $\left( \tilde{\lambda}_{2,1,t}\right) _{t\geq 0}$ and $%
\left( \tilde{\lambda}_{2,2,t}\right) _{t\geq 0}$, 
\begin{equation}
\tilde{\lambda}_{2,1}:=\lambda _{1}-\theta _{2}\left( \sigma _{1}\alpha
_{1}+\rho \sigma _{2}\alpha _{2}\right) \text{ \  \  \ and \  \ }\tilde{\lambda}%
_{2,2}:=\lambda _{2}-\theta _{2}\left( \rho \sigma _{1}\alpha _{1}+\sigma
_{2}\alpha _{2}\right) .  \label{lambda-2-^}
\end{equation}%
As in the asset specialization case, we may interpret (\ref{X-1-div}) as the
wealth of a manager who invests in the personalized fictitious market $%
\mathcal{\tilde{M}}_{1}^{d}:=\left( B,\tilde{S}_{1,1},\tilde{S}_{1,2}\right) 
$ with (pseudo) stocks $\tilde{S}_{1,1},\tilde{S}_{1,2}$ solving 
\begin{equation*}
\frac{d\tilde{S}_{1,1}}{\tilde{S}_{1,1}}=\sigma _{1}\left( \tilde{\lambda}%
_{1,1}dt+dW_{1}\right) \  \  \  \text{and }\  \  \frac{d\tilde{S}_{1,2}}{\tilde{S}%
_{1,2}}=\sigma _{2}\left( \text{\ }\tilde{\lambda}_{1,2}dt+dW_{2}\right) ,
\end{equation*}%
with $\tilde{\lambda}_{1,1}$ and $\tilde{\lambda}_{1,2}$ given in (\ref%
{lamda-1-^}), while receiving returns from a \textit{random endowment process} $%
\left( Y_{1,t}\right) _{t\geq 0},$%
\begin{equation*}
\frac{d\tilde{X}_{1}}{\tilde{X}_{1}}=\alpha _{1}\sigma _{1}\left( \tilde{%
\lambda}_{1,1}dt+dW_{1}\right) +\alpha _{2}\sigma _{2}\left( \tilde{\lambda}%
_{1,2}dt+dW_{2}\right) +dY_{1,t}
\end{equation*}%
with%
\begin{equation*}
dY_{1}=-\theta _{1}\sigma _{1}\beta _{1}\left( \lambda _{1}dt+dW_{1}\right)
-\theta _{1}\sigma _{2}\beta _{2}\left( \lambda _{2}dt+dW_{2}\right) +\frac{1%
}{2}\theta _{1}\left( 1+\theta _{1}\right) C_{2}\left( \beta \right) dt
\end{equation*}%
and $Y_{1,0}=0.$

Similarly, manager $2$ invests in a personalized fictitious market $\mathcal{%
\tilde{M}}_{2}^{d}:=\left( B,\tilde{S}_{2,1},\tilde{S}_{2,2}\right) $ with
(pseudo) stocks $\tilde{S}_{2,1},\tilde{S}_{2,2}$ solving 
\begin{equation*}
\frac{d\tilde{S}_{2,1}}{\tilde{S}_{2,1}}=\sigma _{2}\left( \tilde{\lambda}%
_{2,1}dt+dW_{1}\right) \  \  \  \text{and }\  \  \frac{d\tilde{S}_{2,2}}{\tilde{S}%
_{2,2}}=\sigma _{2}\left( \text{\ }\tilde{\lambda}_{2,2}dt+dW_{2}\right) ,
\end{equation*}%
with $\tilde{\lambda}_{2,1}$ and $\tilde{\lambda}_{2,2}$ given in (\ref%
{lambda-2-^}), and 
\begin{equation*}
\frac{d\tilde{X}_{2}}{\tilde{X}_{2}}=\sigma _{1}\beta _{1}\left( \tilde{%
\lambda}_{2,1}dt+dW_{1}\right) +\sigma _{2}\beta _{2}\left( \tilde{\lambda}%
_{2,2}dt+dW_{2}\right) +dY_{2}
\end{equation*}%
with 
\begin{equation*}
dY_{2}=-\theta _{2}\sigma _{1}\alpha _{1}\left( \lambda _{1}dt+dW_{1}\right)
-\theta _{2}\sigma _{2}\alpha _{2}\left( \lambda _{2}dt+dW_{2}\right) +\frac{%
1}{2}\theta _{2}\left( 1+\theta _{2}\right) C_{1}\left( \alpha \right) dt,
\end{equation*}%
and $Y_{2,0}=0.$

The personalized fictitious markets $\mathcal{\tilde{M}}_{1}^{d}$ and $%
\mathcal{\tilde{M}}_{2}^{d}$ are both \textit{complete, }in contrast to
their counterparts $\mathcal{\tilde{M}}_{1}^{s}$ and $\mathcal{\tilde{M}}%
_{2}^{s}$ in the asset specialization case. This completeness makes the
underlying problems tractable, as we discuss next.

\subsection{Best-response forward relative performance criteria}

\label{Passive forward performance criteria}

In analogy to the asset diversification case, we apply Definition \ref%
{definition of relative forward} to define the best-response forward
performance criteria, denoted with a slight abuse of notation by $\left(
V_{1}(\tilde{x}_{1},t;\beta )\right) _{t\geq 0}$ and $\left( V_{2}\left( 
\tilde{x}_{2},t;\alpha \right) \right) _{t\geq 0},$ $\tilde{x}_{1}=\frac{%
x_{1}}{^{x_{2}^{\theta _{1}}}}$, $\tilde{x}_{2}=\frac{x_{2}}{x_{1}^{\theta
_{2}}}$, where $\alpha ,\beta $ stand for arbitrary policies of the
competitors. Because of symmetry, we only analyze the quantities pertinent
to manager $1.$ We provide complete characterization of her relative forward
criterion, the optimal investment and the optimal wealth processes under
relative performance concerns.

We first recall two auxiliary functions, $u_{1}:\mathbb{D}\rightarrow 
\mathbb{R}_{+}$ and $h_{1}:\mathbb{R}\times \mathbb{R}_{+}\rightarrow 
\mathbb{R}_{+}$. Function $u_{1}$ solves 
\begin{equation}
u_{1,t}=\frac{1}{2}\frac{u_{1,z}^{2}}{u_{1,zz}},\text{ }  \label{u-equation}
\end{equation}%
with initial condition given by 
\begin{equation}
\left( u_{1}^{\prime }\left( z,0\right) \right) ^{\left( -1\right)
}:=\int_{0^{+}}^{\infty }z^{-y}d\nu _{1}\left( y\right),  \label{u-initial}
\end{equation}%
for a finite positive Borel measure $\nu _{1}$.

The function $h_{1}$ is defined as $h_{1}(z,t):=\left( u_{1,z}\right)
^{\left( -1\right) }(e^{-z+\frac{t}{2}},t)$ (spatial inverse). It solves $%
h_{1,t}+\frac{1}{2}h_{1,zz}=0$ \ with \ $h_{1}\left( z,0\right)
=\int_{0^{+}}^{\infty }e^{yz}d\nu _{1}\left( y\right) $, and is given by 
\begin{equation}
h_{1}\left( z,t\right) =\int_{0^{+}}^{\infty }e^{yz-\frac{1}{2}y^{2}t}d\nu
_{1}\left( y\right) .  \label{h-1-explicit}
\end{equation}%
Let also $R_{1}:\mathbb{D}\rightarrow \mathbb{R}_{+}$, 
\begin{equation}
R_{1}(z,t):=-\frac{u_{1,z}\left( z,t\right) }{zu_{1,zz}\left( z,t\right) }=%
\frac{h_{1,z}\left( h_{1}^{\left( -1\right) }\left( z,t\right) ,t\right) }{%
h_{1}\left( h_{1}^{\left( -1\right) }\left( z,t\right) ,t\right) },
\label{risk-tolerance}
\end{equation}%
with the latter equality following from the definition of $h_{1}.$

The functions $u_{1},h_{1}$ and $R_{1}$ were introduced in \cite{MZ10a},
and used to construct in full generality the locally riskless forward
criteria in the absence of competition ($\theta _{1}=0)$; we refer the
reader therein for details, and especially for the assumptions on measure $%
\nu _{1}.$

Finally, we consider the processes $\left( A_{1,t}\right) _{t\geq 0}$ and\ $%
\left( M_{1,t}\right) _{t\geq 0}$ defined, for $\tilde{\lambda}_{1,1},\tilde{%
\lambda}_{1,2}$ as in (\ref{lamda-1-^}), as 
\begin{equation}
A_{1}:=\frac{1}{1-\rho ^{2}}\int_{0}^{t}\left( \tilde{\lambda}%
_{1,1}^{2}-2\rho \tilde{\lambda}_{1,1}\tilde{\lambda}_{1,2}+\tilde{\lambda}%
_{1,2}^{2}\right) ds\text{ \  \ and \ }M_{1}:=\int_{0}^{t}\tilde{\lambda}%
_{1,1}dW_{1}+\int_{0}^{t}\tilde{\lambda}_{1,2}dW_{2}\text{.\  \ }
\label{L-M-1}
\end{equation}

Next, we present the main result in the asset diversification case.

\begin{proposition}
\label{Competitive forward} Let policy $\beta =\left( \beta _{1},\beta
_{2}\right) \in \mathcal{A}$ and $C_{2}\left( \beta \right) $ as in (\ref%
{C-2}), and define $\left( B_{1,t}\right) _{t\geq 0}$ as 
\begin{equation}
B_{1}:=e^{\frac{1}{2}\theta _{1}\left( 1-\theta _{1}\right)
\int_{0}^{t}C_{2}\left( \beta \right) ds}.  \label{B-1-process}
\end{equation}%
Let the processes $A_{1}$ and $M_{1}$ be as in (\ref{L-M-1}), $u_{1}(\tilde{x%
}_{1},0)$ as in (\ref{u-initial}) with $u_{1}\left( z,t\right) $ solving (%
\ref{u-equation}), and introduce the process $\left( H_{1,t}\right) _{t\geq
0},$ 
\begin{align}\label{H-1}
H_{1}&:=\frac{h_{1,z}\left( h_{1}^{\left( -1\right) }\left( \tilde{x}%
_{1},0\right) +A_{1}+M_{1},A_{1}\right) }{h_{1}\left( h_{1}^{\left(
-1\right) }\left( \tilde{x}_{1},0\right) +A_{1}+M_{1},A_{1}\right) }\\
\nonumber &=\frac{\int_{0^{+}}^{\infty }e^{yh_{1}^{\left( -1\right) }\left( \tilde{x}%
_{1},0\right) \text{ }}yd\tilde{\nu}_{1,t}\left( y\right) }{%
\int_{0^{+}}^{\infty }e^{yh_{1}^{\left( -1\right) }\left( \tilde{x}%
_{1},0\right) \text{ }}d\tilde{\nu}_{1,t}\left( y\right) }, 
\end{align}
with 
\begin{equation}
d\tilde{\nu}_{1,t}\left( y\right) =e^{y(1-\frac{y}{2})A_{1}+yM_{1}}d\nu
_{1,t}\left( y\right) .  \label{random-measure}
\end{equation}%
The following assertions hold:

i)\ The process $\left( V_{1}(\tilde{x}_{1},t;\beta )\right) _{t\geq 0},$
given by 
\begin{equation}
V_{1}(\tilde{x}_{1},t;\beta )=u_{1}\left( \frac{\tilde{x}_{1}}{B_{1}}%
,A_{1}\right) ,  \label{forward-1-div}
\end{equation}%
with $V_{1}(\tilde{x}_{1},0;\beta )=u_{1}(\tilde{x}_{1},0)$ is the unique
locally riskless best-response forward criterion with such initial
condition. For each $\beta \in \mathcal{A}$ and $\tilde{x}_{1}>0,$ $V_{1}(%
\tilde{x}_{1},t;\beta )$ is time-decreasing.

ii) The optimal wealth process $\left( \tilde{X}_{1,t}^{\ast }\right)
_{t\geq 0}$ is given by 
\begin{align}\label{optimal-wealth-1-div}
\tilde{X}_{1}^{\ast }&=B_{1}h_{1}\left( h_{1}^{\left( -1\right) }\left( 
\tilde{x}_{1},0\right) +A_{1}+M_{1},A_{1}\right)\\
\nonumber  &=B_{1}\int_{0^{+}}^{\infty }e^{yh_{1}^{\left( -1\right) }\left( \tilde{x}%
_{1},0\right) \text{ }}d\tilde{\nu}_{1,t}\left( y\right) ,
\end{align}
with $\tilde{\nu}_{1,t}$ as in (\ref{random-measure}).
iii) Let\ $\alpha ^{\ast }\left( z,t\right) =\left( \alpha _{1}^{\ast
}\left( z,t\right) ,\alpha _{2}^{\ast }\left( z,t\right) \right) ,$ $\left(
z,t\right) \in \mathbb{D},$ be defined as 
\begin{equation}
\alpha _{1}^{\ast }\left( z,t\right) =\frac{\tilde{\lambda}_{1,1}-\rho 
\tilde{\lambda}_{1,2}}{\left( 1-\rho ^{2}\right) \sigma _{1}}%
B_{1}R_{1}\left( \frac{z}{B_{1}},A_{1}\right) +\theta _{1}\beta _{1}
\label{a*-feedback}
\end{equation}%
and%
\begin{equation*}
\alpha _{2}^{\ast }\left( z,t\right) =\frac{-\rho \tilde{\lambda}_{1,1}+%
\tilde{\lambda}_{1,2}}{\left( 1-\rho ^{2}\right) \sigma _{2}}%
B_{1}R_{1}\left( \frac{z}{B_{1}},A_{1}\right) +\theta _{1}\beta _{2}.
\end{equation*}%
Then, the optimal investment processes $\left( \alpha _{1,t}^{\ast }\right)
_{t\geq 0},\left( \alpha _{2,t}^{\ast }\right) _{t\geq 0}$ are given in the
feedback form, 
\begin{equation}
\alpha _{1}^{\ast }=\alpha _{1}^{\ast }\left( \tilde{X}_{1}^{\ast
},A_{1}\right) \text{ \  \  \ and \  \ }\alpha _{2}^{\ast }=\alpha _{2}^{\ast
}\left( \tilde{X}_{1}^{\ast },A_{1}\right) ,\text{\ }  \label{feedback-div}
\end{equation}%
and in closed form, 
\begin{equation}
\alpha _{1}^{\ast }=\frac{\tilde{\lambda}_{1,1}-\rho \tilde{\lambda}_{1,2}}{%
\left( 1-\rho ^{2}\right) \sigma _{1}}H_{1}+\theta _{1}\beta _{1}
\label{a-optimal-div}
\end{equation}%
and 
\begin{equation*}
\alpha _{2}^{\ast }=\frac{-\rho \tilde{\lambda}_{1,1}+\tilde{\lambda}_{1,2}}{%
\left( 1-\rho ^{2}\right) \sigma _{2}}H_{1}+\theta _{1}\beta _{2},
\end{equation*}%
with $H_{1}$ as in (\ref{H-1}).
\end{proposition}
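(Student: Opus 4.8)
The plan is to reduce the problem, by an explicit change of state variable together with a shift of the control, to the locally riskless forward performance construction of \cite{MZ10a} in a \emph{complete} It\^{o}-diffusion market, and then to translate the resulting objects back through the reduction. Concretely, I would introduce the shifted controls $\hat\alpha_i:=\alpha_i-\theta_1\beta_i$, $i=1,2$, and set $\hat X_1:=\tilde X_1/B_1$ with $B_1$ as in (\ref{B-1-process}). Using the definitions (\ref{lamda-1-^}) of $\tilde\lambda_{1,1},\tilde\lambda_{1,2}$ and (\ref{C-2}) of $C_2(\beta)$, a direct rearrangement of (\ref{X-1-div}) shows that all $\beta$-dependent drift terms (the $\lambda_i$ contributions of the biased benchmark and the $\tfrac12\theta_1(1+\theta_1)C_2(\beta)$ term), together with $dB_1/B_1=\tfrac12\theta_1(1-\theta_1)C_2(\beta)\,dt$, combine to give
\begin{equation*}
\frac{d\hat X_1}{\hat X_1}=\sigma_1\hat\alpha_1\bigl(\tilde\lambda_{1,1}\,dt+dW_1\bigr)+\sigma_2\hat\alpha_2\bigl(\tilde\lambda_{1,2}\,dt+dW_2\bigr),
\end{equation*}
i.e.\ $\hat X_1$ is the discounted wealth in the complete fictitious market $\mathcal{\tilde{M}}_1^d$ driven by the \emph{unconstrained} control $\hat\alpha=(\hat\alpha_1,\hat\alpha_2)$, with $\hat X_{1,0}=\tilde x_1$ since $B_{1,0}=1$. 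Because $\beta\in\mathcal A$ is fixed and $\lambda_1,\lambda_2$ are bounded, $\hat\alpha$ is admissible for $\hat X_1$ if and only if $\alpha$ is, so the super/martingale requirements of Definition \ref{definition of relative forward} for $V_1(\tilde X_1,t;\beta)$ become the same requirements for a process $v_1(\hat X_1,t)$, with $V_1(\tilde x_1,t;\beta)=v_1(\tilde x_1/B_1,t)$.

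\textbf{The complete-market construction.} In $\mathcal{\tilde{M}}_1^d$, It\^{o}'s formula along $\hat X_1$ (as in the proof of Proposition \ref{forward SPDE 1}, but now with \emph{no} residual second-order term, since $\hat\alpha$ controls both noise directions) shows that a locally riskless $v_1\in\mathcal U$ is a best-response forward criterion exactly when, after the time change to the business clock $A_1$ of (\ref{L-M-1}) --- which is the accumulated total squared market price of risk of $\mathcal{\tilde{M}}_1^d$, namely the integral of $(1-\rho^2)^{-1}(\tilde\lambda_{1,1}^2-2\rho\tilde\lambda_{1,1}\tilde\lambda_{1,2}+\tilde\lambda_{1,2}^2)$ --- its profile solves $u_{1,t}=\tfrac12 u_{1,z}^2/u_{1,zz}$. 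By \cite{MZ10a}, the solutions of this equation reachable from an admissible initial datum are exactly those parametrized by a finite positive Borel measure $\nu_1$ as in (\ref{u-initial})--(\ref{h-1-explicit}), and this correspondence is one-to-one; this yields simultaneously existence and \emph{uniqueness} of the locally riskless criterion with prescribed initial condition, and gives $v_1(z,t)=u_1(z,A_1)$, hence (\ref{forward-1-div}). Time-decrease of $V_1$ then follows because $u_1(\cdot,t)$ is decreasing in $t$ (as $u_{1,t}=\tfrac12 u_{1,z}^2/u_{1,zz}<0$ by concavity), $A_1$ is non-decreasing, $u_1$ is increasing in $z$, and $\tilde x_1/B_1$ is non-increasing in $t$ since $B_1$ is non-decreasing for $\theta_1\in(0,1]$; all three effects push $V_1$ down.

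\textbf{Optimal wealth and policy.} For (ii), the complete-market theory of \cite{MZ10a} gives the optimal wealth of $\hat X_1$ at business time $A_1$ as $\hat X_1^{\ast}=h_1(h_1^{(-1)}(\tilde x_1,0)+A_1+M_1,A_1)$, with $M_1$ the martingale associated with the market price of risk of $\mathcal{\tilde{M}}_1^d$ (cf.~(\ref{L-M-1})); multiplying by $B_1$ yields (\ref{optimal-wealth-1-div}), and the explicit integral form follows from (\ref{h-1-explicit}) by the change of measure (\ref{random-measure}), which simply absorbs the $A_1,M_1$-dependence into $\tilde\nu_{1,t}$. For (iii), applying It\^{o} to $\hat X_1^{\ast}=h_1(\cdots)$ and using $h_{1,t}+\tfrac12 h_{1,zz}=0$ together with the identification of $A_1$ as the accumulated squared market price of risk, the drift and martingale parts collapse, leaving $d\hat X_1^{\ast}/\hat X_1^{\ast}$ of the form $\sigma_1\hat\alpha_1^{\ast}(\tilde\lambda_{1,1}dt+dW_1)+\sigma_2\hat\alpha_2^{\ast}(\tilde\lambda_{1,2}dt+dW_2)$; matching the $dW_1,dW_2$ coefficients uniquely identifies $\hat\alpha_1^{\ast},\hat\alpha_2^{\ast}$, in which $R_1(\hat X_1^{\ast},A_1)=H_1$ by (\ref{risk-tolerance}) and (\ref{H-1}). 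Undoing the shift, $\alpha_i^{\ast}=\hat\alpha_i^{\ast}+\theta_1\beta_i$ gives (\ref{a-optimal-div}) and its companion; rewriting $\hat X_1^{\ast}=\tilde X_1^{\ast}/B_1$ and invoking the definition (\ref{risk-tolerance}) of $R_1$ produces the feedback form (\ref{a*-feedback})--(\ref{feedback-div}).

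\textbf{The main difficulty.} The It\^{o} bookkeeping in the first and third steps is routine; the genuine obstacle is to import the \cite{MZ10a} machinery into a setting where neither the market coefficients nor the competitor's policy $\beta$ are chosen a priori. One must verify that $A_1$ and $M_1$ are well defined (here boundedness of $\lambda_1,\lambda_2$ and admissibility of $\beta$ suffice, as $C_2(\beta)$ is locally integrable and $\tilde\lambda_{1,1},\tilde\lambda_{1,2}$ inherit local square-integrability), that admissibility is preserved under the $B_1$-rescaling and the $\theta_1\beta$-shift, and --- most delicately --- that the \emph{uniqueness} half of the characterization survives, i.e.\ that every locally riskless best-response forward criterion with the given initial condition arises from a measure $\nu_1$ as in (\ref{u-initial}). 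It is precisely the completeness of $\mathcal{\tilde{M}}_1^d$ (guaranteed by $\rho^2\neq1$) that makes this inversion available, which is why the diversification case admits a full characterization while the specialization case does not.
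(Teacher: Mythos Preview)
Your proposal is correct and follows essentially the same approach as the paper: the key reduction is the control shift $\hat\alpha_i=\alpha_i-\theta_1\beta_i$ together with the rescaling $\hat X_1=\tilde X_1/B_1$, which brings the dynamics into the complete-market form of \cite{MZ10a}, after which everything (criterion, optimal wealth, optimal policies, time-monotonicity) is read off from that framework. Your monotonicity argument (via monotonicity of $u_1$ in each argument, of $A_1$, and of $B_1$) is a slight rephrasing of the paper's direct computation of $\frac{d}{dt}V_1$, but amounts to the same thing.
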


\begin{proof}
\bigskip Let $\hat{\alpha}_{1}:=\alpha _{1}-\theta _{1}\beta _{1},$ $\hat{%
\alpha}_{2}:=\alpha _{2}-\theta _{1}\beta _{2}.$ Then, the state dynamics (%
\ref{X-1-div})\ can be written as 
\begin{equation*}
\frac{d\tilde{X}_{1}}{\tilde{X}_{1}}=\hat{\alpha}_{1}\sigma _{1}\left( 
\tilde{\lambda}_{1,1}dt+dW_{1}\right) +\hat{\alpha}_{2}\sigma _{2}\left( 
\tilde{\lambda}_{1,2}dt+dW_{2}\right) +\frac{1}{2}\theta _{1}(1-\theta
_{1})C_{2}\left( \beta \right) dt.
\end{equation*}%
Defining the auxiliary process $( \hat{X}_{1,t}) _{t\geq 0}$ by 
\begin{equation*}
\hat{X}_{1}=\frac{\tilde{X}_{1}}{B_{1}}, 
\end{equation*}%
we have that%
\begin{equation}
\frac{d\hat{X}_{1}}{\hat{X}_{1}}=\hat{\alpha}_{1}\sigma _{1}\left( \tilde{%
\lambda}_{1,1}dt+dW_{1}\right) +\hat{\alpha}_{2}\sigma _{2}\left( \tilde{%
\lambda}_{1,2}dt+dW_{2}\right) ,  \label{X-1-aux}
\end{equation}%
with $\hat{X}_{1,0}=\tilde{X}_{1,0}=\tilde{x}_{1}.$ We are, then, in the
complete market framework of \cite[Section 3]{MZ10a} and we deduce that if 
$u_{1}:\mathbb{D\rightarrow R}_{+}$ solves (\ref{u-equation}) and satisfies (%
\ref{u-initial}), then the process $u_{1}\left( \hat{X}_{1},A_{1}\right) $
is a supermartingale for any $\left( \left( \hat{\alpha}_{1,t}\right)
_{t\geq 0},\left( \hat{\alpha}_{2,t}\right) _{t\geq 0}\right) $ and becomes
a martingale for $\left( \hat{\alpha}_{1,t}^{\ast }\right) _{t\geq 0},\left( 
\hat{\alpha}_{2,t}^{\ast }\right) _{t\geq 0}$ given by 
\begin{equation*}
\hat{\alpha}_{1}^{\ast }=-\frac{\tilde{\lambda}_{1,1}-\rho \tilde{\lambda}%
_{1,2}}{\left( 1-\rho ^{2}\right) \sigma _{1}}\hat{R}^{\ast }\text{ \ and \ }%
\hat{\alpha}_{2}^{\ast }=\frac{-\rho \tilde{\lambda}_{1,1}+\tilde{\lambda}%
_{1,2}}{\left( 1-\rho ^{2}\right) \sigma _{2}}\hat{R}^{\ast },
\end{equation*}%
with $\left( \hat{R}_{t}^{\ast }\right) _{t\geq 0}=-\frac{u_{1,z}\left( \hat{%
X}_{1}^{\ast },A_{1}\right) }{\hat{X}_{1}^{\ast }u_{1,zz}\left( \hat{X}%
_{1}^{\ast },A_{1}\right) },$ where $\hat{X}_{1}^{\ast }$ solves (\ref%
{X-1-aux}) with $\left( \hat{\alpha}_{1}^{\ast },\hat{\alpha}_{2}^{\ast
}\right) $ being used. Following the analysis in \cite{MZ10a}, we deduce
that the optimal process $\left( \hat{X}_{1,t}^{\ast }\right) _{t\geq 0}$ is
given in closed form by $\hat{X}_{1}^{\ast }=h_{1}\left( h_{1}^{\left(
-1\right) }\left( \hat{x}_{1},0\right) +A_{1}+M_{1},A_{1}\right) $ and (\ref%
{optimal-wealth-1-div}) follows. Furthermore, from the definition of $h_{1}$
we deduce that 
\begin{equation*}
\hat{\alpha}_{1}^{\ast }=\frac{\tilde{\lambda}_{1,1}-\rho \tilde{\lambda}%
_{1,2}}{\left( 1-\rho ^{2}\right) \sigma _{1}}\frac{h_{1,z}\left(
h_{1}^{\left( -1\right) }\left( \hat{x}_{1},0\right)
+A_{1}+M_{1},A_{1}\right) }{h_{1}\left( h_{1}^{\left( -1\right) }\left( \hat{%
x}_{1},0\right) +A_{1}+M_{1},A_{1}\right) }\text{\  \ }
\end{equation*}%
and, similarly, 
\begin{equation*}
\text{\ }\hat{\alpha}_{2}^{\ast }=\frac{\tilde{\lambda}_{1,1}-\rho \tilde{%
\lambda}_{1,2}}{\left( 1-\rho ^{2}\right) \sigma _{1}}\frac{h_{1,z}\left(
h_{1}^{\left( -1\right) }\left( \hat{x}_{1},0\right)
+A_{1}+M_{1},A_{1}\right) }{h_{1}\left( h_{1}^{\left( -1\right) }\left( \hat{%
x}_{1},0\right) +A_{1}+M_{1},A_{1}\right) }.
\end{equation*}%
We easily deduce that $\hat{\alpha}_{1}^{\ast },\hat{\alpha}_{2}^{\ast }\in 
\mathcal{A}$ as well as the rest of the assertions for the optimal wealth
and optimal policies.

To establish the time monotonicity of $V_{1}(\tilde{x}_{1},t;\beta )$,
observe that, for each $\beta \in \mathcal{A}$ and $\tilde{x}_{1}>0,$ 
\begin{equation*}
\frac{d}{dt}V_{1}(\tilde{x}_{1},t;\beta )=-\frac{1}{2}\theta _{1}\left(
1-\theta _{1}\right) \frac{C_{2}\left( \beta \right) }{B_{1}\left( \beta
\right) }u_{1,x}\left( \frac{\tilde{x}_{1}}{\tilde{B}_{1}},A_{1}\right)+\frac{\tilde{\lambda}_{1,1}^{2}-2\rho \tilde{\lambda}_{1,1}\tilde{\lambda}%
_{1,2}+\tilde{\lambda}_{1,2}^{2}}{1-\rho ^{2}}u_{1,t}\left( \frac{\tilde{x}%
_{1}}{B_{1}},A_{1}\right) <0
\end{equation*}%
as $\theta _{1}<1,$ $C_{2}>0,$ $u_{1,x}>0$ and $u_{1,t}<0.$
\end{proof}

\begin{remark}
We note that the above best-response forward performance differs from the one introduced in \cite{MZ-QF} given by $%
u\left( x/Y_{t},Z_{t}\right) ,$ where $\left( Y_{t}\right) _{t\geq 0}
$ is a traded benchmark and $\left( Z_{t}\right) _{t\geq 0}$ a ``market-view'' process. This process is not locally riskless and its state variable is the individual wealth, and not the relative one. 
\end{remark}

Similar results may be derived for manager $2.$ Let manager $1$ follow an
arbitrary policy, say $\alpha =\left( \alpha _{1},\alpha _{2}\right) \in 
\mathcal{A}.$ If we choose $V_{2}(\tilde{x}_{2},0;\alpha
)=\int_{0^{+}}^{\infty }\tilde{x}_{2}^{-y}d\nu _{2}\left( y\right) ,$ for a
suitable positive Borel measure $\nu _{2},$ we deduce that the unique
locally riskless best-response forward criterion is given by 
\begin{equation*}
V_{2}(\tilde{x}_{2},t;\alpha )=u_{2}\left( \frac{\tilde{x}_{2}}{B_{2}}%
,A_{2}\right) ,
\end{equation*}%
with $u_{2}$ solving (\ref{u-equation}) with $u_{2}\left( z,0\right) =V_{2}(%
\tilde{x}_{2},0;\alpha ),$ $\left( B_{2,t}\right) _{t\geq 0}=$ $e^{\frac{1}{2%
}\theta _{2}\left( 1-\theta _{2}\right) \int_{0}^{t}C_{1}\left( \alpha
\right) ds},$ $\left( A_{2,t}\right) _{t\geq 0}:=\left( 1-\rho ^{2}\right)
\int_{0}^{t}\left( \tilde{\lambda}_{2,1}^{2}-2\rho \tilde{\lambda}_{2,1}%
\tilde{\lambda}_{2,2}+\tilde{\lambda}_{2,2}^{2}\right) ds$ with $\tilde{%
\lambda}_{2,1},\tilde{\lambda}_{2,2}$ and $C_{1}\left( \alpha \right) $ as
in (\ref{lambda-2-^}) and (\ref{C-1}).

Furthermore, if $\left( M_{2,t}\right) _{t\geq 0}:=\int_{0}^{t}\tilde{\lambda%
}_{2,1}dW_{1}+\int_{0}^{t}\tilde{\lambda}_{2,2}dW_{2},$ $%
h_{2}(z,t):=u_{2,z}^{\left( -1\right) }(e^{-z+\frac{t}{2}},t)$ and $\left(
H_{2}\right) _{t\geq 0}$ defined as 
\begin{equation}
H_{2}:=\frac{h_{2,z}\left( h_{2}^{\left( -1\right) }\left( \tilde{x}%
_{2},0\right) +A_{2}+M_{2},A_{2}\right) }{h_{2}\left( h_{2}^{\left(
-1\right) }\left( \tilde{x}_{2},0\right) +A_{2}+M_{2},A_{2}\right) },
\label{H-2}
\end{equation}%
then, the optimal wealth $\left( \tilde{X}_{2,t}^{\ast }\right) _{t\geq 0}$
is given by $\tilde{X}_{2}^{\ast }=B_{2}h_{2}\left( h_{2}^{\left( -1\right)
}\left( \tilde{x}_{2},0\right) +A_{2}+M_{2},A_{2}\right) $ and the policies $%
\left( \beta _{1,t}^{\ast }\right) _{t\geq 0},\left( \beta _{2,t}^{\ast
}\right) _{t\geq 0},$ with 
\begin{equation}
\beta _{1}^{\ast }=\frac{\tilde{\lambda}_{2,1}-\rho \tilde{\lambda}_{2,2}}{%
\left( 1-\rho ^{2}\right) \sigma _{2}}H_{2}+\theta _{2}\alpha _{1},\text{ \
\  \  \ }\beta _{2}^{\ast }=\frac{-\rho \tilde{\lambda}_{2,1}+\tilde{\lambda}%
_{2,2}}{\left( 1-\rho ^{2}\right) \sigma _{2}}H_{2}+\theta _{2}\alpha _{2}
\label{beta*-feedback}
\end{equation}%
are optimal.

Replacing $\tilde{\lambda}_{1,1},\tilde{\lambda}_{1,2},\tilde{\lambda}_{2,1}%
\tilde{\lambda}_{2,2}$ in (\ref{lamda-1-^}) and (\ref{lambda-2-^}), yields
the simplified forms (recall that $\alpha =\left( \alpha _{1},\alpha
_{2}\right) $ and $\beta =\left( \beta _{1},\beta _{2}\right) $) in the
original market dynamics, 
\begin{equation}
\left \{ 
\begin{array}{c}
a_{1}^{\ast }=\frac{\lambda _{1}-\rho \lambda _{2}}{\left( 1-\rho
^{2}\right) \sigma _{1}}H_{1}\left( \tilde{x}_{1},\beta \right) +\left(
1-H_{1}\left( \tilde{x}_{1},\beta \right) \right) \theta _{1}\beta _{1} \\ 
\\ 
a_{2}^{\ast }=\frac{-\rho \lambda _{1}+\lambda _{2}}{\left( 1-\rho
^{2}\right) \sigma _{1}}H_{1}\left( \tilde{x}_{1},\beta \right) +\left(
1-H_{1}\left( \tilde{x}_{1},\beta \right) \right) \theta _{1}\beta _{2},%
\end{array}%
\right.  \label{a*-div}
\end{equation}%
and%
\begin{equation}
\left \{ 
\begin{array}{c}
\beta _{1}^{\ast }=\frac{\lambda _{1}-\rho \lambda _{2}}{\left( 1-\rho
^{2}\right) \sigma _{1}}H_{2}\left( \tilde{x}_{2},\alpha \right) +\left(
1-H_{2}\left( \tilde{x}_{2},\alpha \right) \right) \theta _{2}\alpha _{1},
\\ 
\\ 
\beta _{2}^{\ast }=\frac{-\rho \lambda _{1}+\lambda _{2}}{\left( 1-\rho
^{2}\right) \sigma _{1}}H_{2}\left( \tilde{x}_{2},\alpha \right) +\left(
1-H_{2}\left( \tilde{x}_{2},\alpha \right) \right) \theta _{2}\alpha _{2}.%
\end{array}%
\right.  \label{beta*-div}
\end{equation}

\smallskip 

\textit{Discussion: }The best-response forward criterion (rewritten with
more explicit notation) is given by the locally riskless process $%
u_{1}\left( \frac{\tilde{x}_{1}}{B_{1}\left( \beta ,\rho ,\theta _{1}\right) 
},A_{1}\left( \lambda ,\sigma ,\beta ;\rho ,\theta _{1}\right) \right) .$
The process $B_{1}\left( \beta ,\rho ,\theta \right) >1$ depends only on the
competitor's policy $\beta ,$ the correlation $\rho $, and the competition
parameter $\theta _{1}$. It is increasing in $\theta _{1},$ when $\theta
_{1}\in \left( 0,\frac{1}{2}\right) ,$ and decreasing when $\theta _{1}\in
\left( \frac{1}{2},1\right) ,$ with maximum discounting at $\theta _{1}=%
\frac{1}{2}.$ The discounting vanishes at the limiting values $\theta
_{1}=0,1$. For $\rho ^{2}\neq 1,$ the process $C_{2}\left( \beta _{1},\beta
_{2};\rho \right) >0$ is jointly convex in $\left( \beta _{1},\beta
_{2}\right) $ and achieves a global minimum at $(0,0)$. The process $%
A_{1}\left( \lambda ,\sigma ,\beta ,\theta _{1}\right) $ is non-decreasing
in time and represents a stochastic time change. Furthermore, its time
derivative is convex in the competition parameter $\theta _{1}.$

The case when manager $2$ uses policies $\left( \beta _{1,t}^{0}\right)
_{t\geq 0},\left( \beta _{2,t}^{0}\right) _{t\geq 0}$ with $\beta _{1}^{0}=-%
\frac{\lambda _{1}-\rho \lambda _{2}}{\left( 1-\rho ^{2}\right) \sigma _{1}}%
, $ $\beta _{2}^{0}=-\frac{\lambda _{1}-\rho \lambda _{2}}{\left( 1-\rho
^{2}\right) \sigma _{2}}$ requires special attention. Therein, the modified
risk premia vanish at all times, $\tilde{\lambda}_{1,1}=\tilde{\lambda}%
_{1,2}=0,$ and thus the "personalized" fictitious market $\mathcal{M}%
_{1}^{d} $ \ becomes \textit{worthless}. In turn, $A_{1}=M_{1}=0,$ and $%
\alpha _{1}^{\ast }=\theta _{1}\beta _{1}^{0}$ and $\alpha _{1}^{\ast
}=\theta _{1}\beta _{2}^{0}$. Therefore, the optimal risky strategy is to
simply follow fraction $\theta _{1}$ of this specific competitor's strategy.
This yields $\tilde{X}_{1}^{\ast }=B_{1}\tilde{x}_{1},$ with $B_{1}:=e^{%
\frac{1}{2}\theta _{1}\left( 1-\theta _{1}\right) \left( 1-\rho ^{2}\right)
\int_{0}^{t}\left( \lambda _{1}+\lambda _{2}\right) ^{2}ds}$ and, thus, $%
\left( V_{1}\left( \tilde{X}_{1}^{\ast },t;\beta ^{0}\right) \right) _{t\geq
0}=u_{1}\left( \tilde{x}_{1},0\right) $. This is intuitively pleasing and
consistent with the fact that, in a worthless market, the performance
criterion should \textit{not} change with time (provided all quantities are
expressed in discounted units).

The optimal policy is given both via a feedback and in closed form (cf.~(\ref%
{a*-feedback})\ and (\ref{a-optimal-div})). The feedback control depends on
wealth only through the random function $R_{1}\left( z,t\right) ,$ which is
the relative risk tolerance associated with $u_{1}\left( z,t\right) .$ Using
the results in \cite{MZ10a}, we deduce that $R_{1}\left( z,t\right) ,$ and
thus $\alpha _{1}\left( z,t\right) $ and $\alpha _{2}\left( z,t\right) ,$
are decreasing in time and non-increasing in $z.$

\subsubsection{The CRRA case}

Let the measure in (\ref{u-initial}) be a Dirac, $\nu _{1}\left( dy\right)
=\delta _{\frac{1}{\gamma _{1}}},$ $\gamma _{1}>0.$ Then, $h_{1}\left(
z,t\right) =e^{\frac{z}{\gamma _{1}}-\frac{1}{2}\left( \frac{1}{\gamma _{1}}%
\right) ^{2}t}$ and $H_{1}\left( \tilde{x}_{1},\beta \right) =\frac{1}{%
\gamma _{1}}$ (cf. (\ref{h-1-explicit}) and (\ref{H-1})). Criterion (\ref%
{forward-1-div})\ becomes 
\begin{equation*}
V_{1}(\tilde{x}_{1},t;\beta )=\frac{1}{1-\gamma _{1}}\left( \frac{\tilde{x}%
_{1}}{B_{1}}\right) ^{1-\gamma _{1}}e^{-\frac{1}{2}\frac{1-\gamma _{1}}{%
\gamma _{1}}A_{1}},
\end{equation*}%
and, this is the unique locally riskless homothetic criterion associated
with $\gamma _{1}$. The optimal policies and optimal wealth processes are
given by (\ref{a-optimal-div}) and (\ref{optimal-wealth-1-div}),%
\begin{equation*}
\alpha _{1}^{\ast }=\frac{1}{\gamma _{1}}\frac{\lambda _{1}-\rho \lambda _{2}%
}{\left( 1-\rho ^{2}\right) \sigma _{1}}+\left( 1-\frac{1}{\gamma _{1}}%
\right) \theta _{1}\beta _{1},
\end{equation*}%
\begin{equation*}
\alpha _{2}^{\ast }=\frac{1}{\gamma _{1}}\frac{-\rho \lambda _{1}+\lambda
_{2}}{\left( 1-\rho ^{2}\right) \sigma _{2}}+\left( 1-\frac{1}{\gamma _{1}}%
\right) \theta _{1}\beta _{2}
\end{equation*}%
and $\tilde{X}_{1}^{\ast }=\tilde{x}_{1}e^{\frac{1}{\gamma _{1}}\left( 1-%
\frac{1}{2\gamma _{1}}\right) A_{1}+\frac{1}{\gamma _{1}}M_{1}}B_{1}.$ We
recall that there is no assumption for the preferences of manager $2$, only
that she follows an arbitrary policy $\beta \in \mathcal{A}.$

\bigskip

Similarly, let manager $1$ follow policy $\alpha =\left( \alpha _{1},\alpha
_{2}\right) .$ If $\nu _{2}\left( dy\right) =\delta _{\frac{1}{\gamma _{2}}%
}, $ for $\gamma _{2}>0,$ $\gamma _{2}\neq 1,$ the unique locally riskless
best-response forward criterion for manager $2$ with initial condition $%
V_{2}(\tilde{x}_{2},0;\alpha )=\frac{\tilde{x}_{2}^{1-\gamma _{2}}}{1-\gamma
_{2}}$ is given by%
\begin{equation*}
V_{2}(\tilde{x}_{2},t;\alpha )=\frac{1}{1-\gamma _{2}}\left( \frac{\tilde{x}%
_{2}}{B_{2}}\right) ^{1-\gamma _{2}}e^{-\frac{1}{2}\frac{1-\gamma _{2}}{%
\gamma _{2}}A_{2}},
\end{equation*}%
and the optimal policies and optimal wealth by%
\begin{equation}
\beta _{1}^{\ast }=\frac{1}{\gamma _{2}}\frac{\lambda _{1}-\rho \lambda _{2}%
}{\left( 1-\rho ^{2}\right) \sigma _{1}}+\left( 1-\frac{1}{\gamma _{2}}%
\right) \theta _{2}a_{1}\text{, }\quad\beta _{2}^{\ast }=\frac{1}{\gamma _{2}}%
\frac{-\rho \lambda _{1}+\lambda _{2}}{\left( 1-\rho ^{2}\right) \sigma _{2}}%
+\left( 1-\frac{1}{\gamma _{2}}\right) \theta _{2}a_{2}
\label{beta-power-div}
\end{equation}%
and $\tilde{X}_{2,t}^{\ast }=\tilde{x}_{2}e^{\frac{1}{\gamma _{2}}\left( 1-%
\frac{1}{2\gamma _{2}}\right) A_{2}+\frac{1}{\gamma _{2}}M_{2}}B_{2}.$

\subsection{Forward Nash equilibrium}

\label{forward nash equilibrium}

The forward Nash equilibrium is defined as in Definition \ref{definition of
forward Nash}. To find the equilibrium strategies $\left( \alpha _{t}^{\ast
}\right) _{t\geq 0}$, $\left( \beta _{t}^{\ast }\right) _{t\geq 0}$ one then
needs to solve the non-linear system (cf. (\ref{a*-div}) and (\ref{beta*-div}%
)),%
\begin{equation}
\left \{ 
\begin{array}{c}
a_{1}^{\ast }=\frac{\lambda _{1}-\rho \lambda _{2}}{\left( 1-\rho
^{2}\right) \sigma _{1}}H_{1}\left( \tilde{x}_{1},\beta ^{\ast }\right)
+\left( 1-H_{1}\left( \tilde{x}_{1},\beta ^{\ast }\right) \right) \theta
_{1}\beta _{1}^{\ast } \\ 
a_{2}^{\ast }=\frac{-\rho \lambda _{1}+\lambda _{2}}{\left( 1-\rho
^{2}\right) \sigma _{2}}H_{1}\left( \tilde{x}_{1},\beta ^{\ast }\right)
+\left( 1-H_{1}\left( \tilde{x}_{1},\beta ^{\ast }\right) \right) \theta
_{1}\beta _{2}^{\ast } \\ 
\beta _{1}^{\ast }=\frac{\lambda _{1}-\rho \lambda _{2}}{\left( 1-\rho
^{2}\right) \sigma _{1}}H_{2}\left( \tilde{x}_{2},\alpha ^{\ast }\right)
+\left( 1-H_{2}\left( \tilde{x}_{2},\alpha ^{\ast }\right) \right) \theta
_{2}\alpha _{1}^{\ast } \\ 
\beta _{2}^{\ast }=\frac{-\rho \lambda _{1}+\lambda _{2}}{\left( 1-\rho
^{2}\right) \sigma _{2}}H_{2}\left( \tilde{x}_{2},\alpha ^{\ast }\right)
+\left( 1-H_{2}\left( \tilde{x}_{2},\alpha ^{\ast }\right) \right) \theta
_{2}\alpha _{2}^{\ast }.%
\end{array}%
\right.  \label{system-div}
\end{equation}%
The system is in general difficult to solve unless for special cases, one of
which is examined next.

\subsubsection{The CRRA case}

We derive explicit solutions when both managers have homothetic forward
criteria using (\ref{beta-power-div}) and (\ref%
{system-div}).

\begin{proposition}
\label{pro:Nash-CARRA-dive} Let $\gamma _{1},\gamma _{2}>0$ with $\gamma
_{1},\gamma _{2}\neq 1,$ and assume that $\gamma _{1}\gamma _{2}-\theta _{1}\theta _{2}(1-\gamma
_{1})(1-\gamma _{2})\neq 0$. Then, the Nash equilibrium strategies $\left(
\alpha _{t}^{\ast }\right) _{t\geq 0}$, $\left( \beta _{t}^{\ast }\right)
_{t\geq 0}$ are given as  
\begin{align}
\label{pi-1-Nash-dive} \alpha^{\ast}_{1} &= c_{\alpha}\frac{\lambda_{1}-\rho\lambda_{2}}{\sigma_{1}(1-\rho^2)}\quad\text{and}\quad \alpha^{\ast}_{2}=c_{\alpha}\frac{\lambda_{2}-\rho\lambda_{1}}{\sigma_{2}(1-\rho^2)}\\
\label{pi-2-Nash-dive} \beta^{\ast}_{1} &= c_{\beta}\frac{\lambda_{1}-\rho\lambda_{2}}{\sigma_{1}(1-\rho^2)}\quad\text{and}\quad \beta^{\ast}_{2}=c_{\beta}\frac{\lambda_{2}-\rho\lambda_{1}}{\sigma_{2}(1-\rho^2)}
\end{align}
where the constants $c_{\alpha}$ and $c_{\beta}$ are defined as
\begin{equation*}
c_{\alpha}:=\frac{\gamma _{2}+\theta _{1}(\gamma _{1}-1)}{\gamma _{1}\gamma _{2}-\theta _{1}\theta _{2}(1-\gamma
_{1})(1-\gamma _{2}) },\quad c_{\beta}:=\frac{\gamma _{1}+\theta _{2}(\gamma _{2}-1)}{\gamma _{1}\gamma _{2}-\theta _{1}\theta _{2}(1-\gamma
_{1})(1-\gamma _{2}) }.
\end{equation*}
\end{proposition}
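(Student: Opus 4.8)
The plan is to turn the Nash requirement into a finite-dimensional linear fixed-point problem, using the best-response policies already obtained in the homothetic analysis of Section~\ref{Passive forward performance criteria}. Recall that for the Dirac measure $\nu_1(dy)=\delta_{1/\gamma_1}$ one has $H_1\equiv 1/\gamma_1$ (and likewise $H_2\equiv 1/\gamma_2$ for manager~$2$), so the general best-response feedbacks \eqref{a*-div} and \eqref{beta*-div} reduce to manager~$1$'s unique locally riskless best response to an \emph{arbitrary} admissible $\beta=(\beta_1,\beta_2)$,
\[
\alpha_1^{\ast}=\tfrac{1}{\gamma_1}m_1+\bigl(1-\tfrac{1}{\gamma_1}\bigr)\theta_1\beta_1,
\qquad
\alpha_2^{\ast}=\tfrac{1}{\gamma_1}m_2+\bigl(1-\tfrac{1}{\gamma_1}\bigr)\theta_1\beta_2,
\]
with $m_1:=\dfrac{\lambda_1-\rho\lambda_2}{(1-\rho^2)\sigma_1}$ and $m_2:=\dfrac{\lambda_2-\rho\lambda_1}{(1-\rho^2)\sigma_2}$, and symmetrically to manager~$2$'s best response to arbitrary admissible $\alpha=(\alpha_1,\alpha_2)$, obtained by replacing $\gamma_1,\theta_1,\beta$ by $\gamma_2,\theta_2,\alpha$ (cf.~also \eqref{beta-power-div}). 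By Definition~\ref{definition of forward Nash}, a forward Nash equilibrium is precisely a pair of policies each of which is a best response to the other, so substituting $\beta=\beta^{\ast}$ into manager~$1$'s formulas and $\alpha=\alpha^{\ast}$ into manager~$2$'s produces the system \eqref{system-div} (evaluated at the homothetic criteria) that $(\alpha^{\ast},\beta^{\ast})$ must satisfy.

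First I would observe that the same pair $(m_1,m_2)$ appears in all four equations, which motivates the ansatz $\alpha^{\ast}=(c_\alpha m_1,\,c_\alpha m_2)$ and $\beta^{\ast}=(c_\beta m_1,\,c_\beta m_2)$ for two scalar constants $c_\alpha,c_\beta$. Inserting this and cancelling the common factors $m_1,m_2$ collapses the four equations into the two scalar identities $\gamma_1 c_\alpha=1+(\gamma_1-1)\theta_1 c_\beta$ and $\gamma_2 c_\beta=1+(\gamma_2-1)\theta_2 c_\alpha$, i.e.\ the linear system
\[
\begin{pmatrix}\gamma_1 & -(\gamma_1-1)\theta_1\\[3pt] -(\gamma_2-1)\theta_2 & \gamma_2\end{pmatrix}
\begin{pmatrix}c_\alpha\\ c_\beta\end{pmatrix}=\begin{pmatrix}1\\ 1\end{pmatrix}.
\]
Its determinant equals $\gamma_1\gamma_2-(\gamma_1-1)(\gamma_2-1)\theta_1\theta_2=\gamma_1\gamma_2-\theta_1\theta_2(1-\gamma_1)(1-\gamma_2)$, which is nonzero by hypothesis; Cramer's rule then delivers precisely the stated expressions for $c_\alpha$ and $c_\beta$, hence the formulas \eqref{pi-1-Nash-dive}--\eqref{pi-2-Nash-dive}.

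It remains to verify admissibility and to close the logical loop. Because $\lambda_1,\lambda_2$ are bounded below and above by positive constants and each $\sigma_i$ is bounded away from $0$, the processes $\alpha^{\ast},\beta^{\ast}$ are bounded, hence belong to $\mathcal{A}$; the corresponding relative wealths $\tilde X_1^{\ast},\tilde X_2^{\ast}$ are strictly positive exponentials (cf.~\eqref{optimal-wealth-1-div}), so the state constraints in \eqref{A} are met. With $\beta^{\ast}$ now fixed as an admissible policy, Proposition~\ref{Competitive forward} in the CRRA case asserts that $V_1(\tilde x_1,t;\beta^{\ast})$ is a locally riskless best-response forward criterion whose optimal policy is the right-hand side of manager~$1$'s best-response formula evaluated at $\beta^{\ast}$; by construction this equals $\alpha^{\ast}$. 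The symmetric statement holds for manager~$2$. Hence the pair $\bigl(V_1(\tilde x_1,t;\beta^{\ast}),\alpha^{\ast}\bigr)$ and $\bigl(V_2(\tilde x_2,t;\alpha^{\ast}),\beta^{\ast}\bigr)$ satisfies every condition in Definition~\ref{definition of forward Nash}.

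I do not expect a genuine obstacle: the content is the algebraic reduction to a $2\times2$ linear system together with the observation that the proportionality ansatz is consistent. The one point deserving a little care is the final self-consistency step — one should perform the fixed-point computation at the level of the \emph{coefficient processes} (the bounded ratios $\lambda_i/\sigma_i$), so that the mutual dependence of the modified Sharpe ratios $\tilde\lambda_{1,1},\tilde\lambda_{1,2}$ on $\beta^{\ast}$ and of $\beta^{\ast}$ on $\alpha^{\ast}$ is resolved simultaneously rather than pathwise; since every object involved is an explicit affine function of bounded adapted processes, no further analytic input is needed.
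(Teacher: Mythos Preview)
Your proposal is correct and follows essentially the same route as the paper: substitute the CRRA values $H_1\equiv 1/\gamma_1$, $H_2\equiv 1/\gamma_2$ into the best-response system \eqref{system-div}, observe it becomes linear with determinant $\gamma_1\gamma_2-\theta_1\theta_2(1-\gamma_1)(1-\gamma_2)$, solve, and check admissibility from the boundedness of $\lambda_1,\lambda_2$. Your proportionality ansatz is just an efficient way to exploit that the $4\times4$ linear system decouples into two copies of the same $2\times2$ block; the paper simply says ``simple calculations'' for this step. One small remark on admissibility: the paper does not assume $\sigma_i$ is bounded away from zero, so $\alpha^{\ast}$ need not be bounded; what matters is that $\sigma_1\alpha_1^{\ast}=c_\alpha(\lambda_1-\rho\lambda_2)/(1-\rho^2)$ (and similarly the other products) is bounded, which suffices for the integrability condition in \eqref{A}.
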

\begin{proof}
Taking into account \eqref{beta-power-div}, we get that system \eqref{system-div} becomes linear. Assumption $\gamma _{1}\gamma _{2}-\theta _{1}\theta _{2}(1-\gamma
_{1})(1-\gamma _{2})\neq 0$ guarantees that the determinant is different than zero and hence the system admits a unique solution. Simple calculations imply \eqref{pi-1-Nash-dive} and \eqref{pi-2-Nash-dive} and standing assumptions on $\lambda_1$ and $\lambda_2$ yield the admissibility of the equilibrium investment strategies.   
\end{proof}

Similarly to the asset specialization setting, the Nash equilibrium strategies \eqref{pi-1-Nash-dive} and \eqref{pi-2-Nash-dive} have the same form as the ones in the
log-normal market and backward utility maximization criteria (see \cite[Proposition 2]{Basak15}). All conclusions in \cite{Basak15} hold for the general It\^{o}-diffusion setting we assume herein.

\section{Conclusions and extensions}

We have studied portfolio allocations of two fund managers when they
incorporate relative performance concerns. We have looked at the asset
specialization and asset diversification settings in an It\^{o}-diffusion market.
For both these cases, we have considered the best response and the Nash equilibria. We studied these issues in a new framework we introduce
herein that is based on forward performance criteria. These criteria allow
for ``real-time" updating of both the model coefficients and the competitor's
policies as well as for flexible horizons. Thus, we considerably generalize
the existing work on the subject by allowing i) a considerably more general market model, ii)\ no a priori modeling of the competitor's policy
and iii)\ flexible investment horizons. Next, we discuss some possible
extensions.

\textit{i) Multi-frequencies:\ }In all cases herein, we have assumed that
model selection, trading and relative/competitive performance valuation are
all aligned and, furthermore, that they all occur continuously in time. In
reality, however, these three fundamental attributes are not synchronized. A
more realistic scenario would allow trading to take place more frequently
than model selection, and relative performance evaluation to occur less
frequently than trading. Note that the most extreme case is in the classical
expected utility problem in which the terminal utility is specified only
once, at initial time, with no further risk preference adjustment.

With regards to the relative frequency of trading and model selection, it is
more realistic to assume that the model is selected for some trading period
ahead, say a week, and that within this week, trading takes place in
discrete or continuous time. When relative performance is involved, the
distinct scales of time evolution are more critical, for each fund manager
typically announces her performance at discrete times and not continuously.

\textit{ii) Information about market and competitors: }Information
availability and acquisition for both the market and the competitor's
behavior and performance are of tantamount importance. In the existing
literature it is assumed that both managers have full access to both the
market(s) and risk preferences. While we relax the requirement that neither
the model dynamics nor the competitor's input (risk preferences, chosen
policy and investment horizon) need to be a priori modeled, we do assume
that any information - acquired in real time - about them is available to
both managers, together with their relative bias parameters. These
assumptions are partially supported by existing results; see, for example, 
\cite{Koijen14}, where it is argued that managers acquire such information
from the realized, and publicly available, returns of their piers.

However, several \textquotedblleft under-specification" issues remain open,
especially in terms of the manager's risk preferences, specialized knowledge
and past performance. For example, it might be more realistic to assume that
at the end of each relative evaluation period, each fund manager receives
information about the performance of the other and, right after, formulates
a view about the possible upcoming performance till the end of the next
evaluation period. This will partially address the absence of complete
information under asset specialization. In this case, injecting personal
views could lead to a forward Black-Litterman type criterion under
competition.

\textit{iii) Beyond locally riskless and reduced form relative
performance/competition criteria: }Herein, we worked with criteria that are,
from the one hand, locally riskless processes and, from the other, of the
``homogeneous" scaling (\ref{metric1}) and (\ref{metric2}). In general,
relative performance concerns might be modeled, at the level of the
criterion, by arbitrary $\mathcal{F}_{t}$-adapted processes, say $%
C_{1}(x_{2},t)$ and $C_{2}(x_{1},t).$ These processes might then model, in a
more refined way, the competition dependence on past performance of the
competitors, market conditions and time in a more realistic way.

In a different direction, the forward criteria might have volatility, which
would capture uncertainty about the model dynamics and/or the competitor's
beliefs and policies. We will then work with criteria of the form 
\begin{equation*}
dU_{1}\left( x_{1},x_{2},t\right) =b_{1,t}\left(
x_{1},C_{1}(x_{2},t),t\right) dt+a_{1,1,t}\left( x_{1},C_{1}(x_{2},t),t\right) dW_{1,t}+\alpha
_{1,2,t}\left( x_{1},C_{1}(x_{2},t),t\right) dW_{2,t}
\end{equation*}%
and 
\begin{equation*}
dU_{2}\left( x_{1},x_{2},t\right) =b_{2,t}\left(
C_{2}(x_{1},t),x_{2},t\right) dt+a_{2,1,t}\left( C_{2}(x_{1},t),x_{2}\right) dW_{1,t}+a_{2,2,t}\left(
C_{2}(x_{1},t),x_{2},t\right) dW_{2,t},
\end{equation*}%
with the volatilities $\left( a_{1,1,t},a_{1,2,t}\right)_{t\geq 0} $ and $\left(
a_{2,1,t},a_{2,2,t}\right)_{t\geq 0} $ being adapted and manager-specific input
processes. Proceeding as in \cite{MZ3} we would
then obtain a stochastic PDE (rather than a random one) with coefficients
depending on the evolving market dynamics and the competitor's policies. As
in the absence of relative concerns, these equations will be ill-posed and
degenerate with little, if any, tractability. In turn, the systems related
to the forward Nash equilibria (cf.~(\ref{system}) and (\ref{system-div}))
would be systems of such infinite dimensional equations.

In a different direction, relative forward criteria may be modeled as
discrete or a combination of discrete and continuous-time processes for
different, possible nested, time regimes, associated with distinct
frequencies as discussed above. For discrete processes, predictability is a
natural assumption (see \cite{Ango16} for a binomial model and adaptive
market parameter selection).

\bibliographystyle{abbrv}
\bibliography{mybib}

\end{document}